

\documentclass[final,3p,times, authoryear]{elsarticle}


\usepackage{amssymb}
\usepackage{amsmath}
\usepackage{amsthm}
\usepackage{float}
\usepackage{booktabs}

\usepackage[most]{tcolorbox}
\usepackage{xcolor}
\usepackage{subfig}
\usepackage{graphicx}
\usepackage{siunitx} 

\usepackage{etoolbox}
\makeatletter
\patchcmd{\subsection}{\normalfont}{\normalfont\bfseries}{}{}
\patchcmd{\subsubsection}{\normalfont}{\normalfont\bfseries}{}{}
\makeatother

\theoremstyle{definition}
\newtheorem{theorem}{Theorem}[section]
\newtheorem{definition}[theorem]{Definition}
\newtheorem{remark}[theorem]{Remark}
\newtheorem{proposition}[theorem]{Proposition}
\newtheorem{example}[theorem]{Example}
\newtheorem{conjecture}[theorem]{Conjecture}  
\newtheorem{claim}[theorem]{Claim}
\bibliographystyle{unsrtnat}
\newcommand{\gcdd}[1]{\mathrm{GCD(#1)}}
\newcommand{\bigA}{\mathcal{A}}
\newcommand{\bigD}{\mathbf{D}}

\newcommand{\bigT}{\mathbf{T}}

\newcommand{\sumQ}{\sum_{j=1}^Q}


\renewcommand{\cite}[1]{\textcolor{blue}{\textit{\citep{#1}}}}

\journal{International Symposium on Transportation and Traffic Theory 2026}

\begin{document}

\begin{frontmatter}



\title{Wardropian Cycles make traffic assignment both optimal and fair by eliminating price-of-anarchy with Cyclical User Equilibrium for compliant connected autonomous vehicles.}

 
\author[a]{Michał Hoffmann}
\author[a]{Michał Bujak}
\author[a]{Grzegorz Jamróz}
\author[a]{Rafał Kucharski}

\affiliation[a]{organization={Jagiellonian University},
            city={Kraków},
            country={Poland}}
\begin{abstract}



    Connected and Autonomous Vehicles (CAVs) open the possibility for centralised routing with full compliance, making System Optimal traffic assignment attainable. However, as System Optimum  makes some drivers better off than others, voluntary acceptance seems dubious. To overcome this issue, we propose a new concept of Wardropian cycles, which, in contrast to previous utopian visions, makes the assignment fair on top of being optimal, which amounts to satisfaction of both Wardrop’s principles.
    Such cycles, represented as sequences of permutations to the daily assignment matrices, always exist and equalise, after a limited number of days, average travel times among travellers (like in User Equilibrium) while preserving everyday optimality of path flows (like in System Optimum). We propose exact methods to compute such cycles and reduce their length and within-cycle inconvenience to the users. As identification of optimal cycles turns out to be NP-hard in many aspects, we introduce a greedy heuristic efficiently approximating the optimal solution. Finally, we introduce and discuss a new paradigm of Cyclical User Equilibrium, which ensures stability of optimal Wardropian Cycles under unilateral deviations.  
    
    We complement our theoretical study with large-scale simulations. In Barcelona, 670 vehicle-hours of Price-of-Anarchy are eliminated using cycles with a median length of 11 days—though 5\% of cycles exceed 90 days. However, in Berlin, just five days of applying the greedy assignment rule significantly reduces initial inequity. In Barcelona, Anaheim, and Sioux Falls, less than 7\% of the initial inequity remains after 10 days, demonstrating the effectiveness of this approach in improving traffic performance with more ubiquitous social acceptability.
   
\end{abstract}
\begin{keyword}
user equilibrium \sep traffic assignment \sep CAVs \sep system optimal \sep game-theory


\end{keyword}

\end{frontmatter}



\section{Introduction}
With increasing environmental awareness, societies demand to reduce traffic externalities, both by limiting traffic volumes and improving system efficiency \cite{lindsey2020addressing}. Variety of push-and-pull measures is applied worldwide with aim to reduce CO2 emissions, noise, fuel consumption, etc.  \cite{hrelja2023decreasing}. With the advent of connected autonomous vehicles (CAVs), policymakers anticipate their benefits in traffic flow efficiency, increased safety, and lower externalities \cite{pribyl2020addressing}. One of the potential benefits of autonomous driving lies in collective coordinated routing, where vehicles are centrally assigned to routes to meet system-wide objectives \cite{cantarella2017transportation, wang2019multiclass}. 
Such opportunity naturally corresponds to the classic notion of system optimal (SO) assignment, where total system costs are minimised \cite{Wardrop1952}, which, at the same time, is the explicit goal of sustainable transportation policies. 
Traditionally, though, system optimal assignments, despite their potential to improve system performance and reduce externalities, were deemed an unattainable theoretical concept without practical application. Thanks to technological advancements of CAVs, coupled with the conceptual framework proposed here, this may change. 

In principle, there are two classes of solutions to traffic assignment problems. 
First, rational humans arrive at the User Equilibrium (UE) where none of the drivers can individually benefit from changing their chosen routes \cite{Wardrop1952}. 
It is widely accepted that such an equilibrium arises in natural conditions when drivers, rational utility maximisers, do not coordinate their decisions and aim to maximise personal utility (or minimise perceived travel time) \cite{morandi2024bridging}. 
UE typically leads to suboptimal system performance (typically measured with total travel times or costs), which can be improved.
The system performance, on the other hand, is best in the System Optimal (SO) assignment, which yields minimal total travel costs. 

The difference between the system performance in SO and UE, measured with the so-called Price-of-Anarchy (PoA) \cite{youn2008price}, tells how much users could benefit (globally) when coordinating towards global optimum.
PoA was perceived indispensable in social systems with individual, rational, not cooperating utility maximisers, and this will not change just because drivers will switch to CAVs. Humans, regardless of whether they are driving or being driven in CAV, will demand to arrive as fast as possible and will remain reluctant to independently pay the costs of system-wide improvements. Thus, two Wardrop principles have so far been considered irreconcilable: either the equity of uncoordinated rational utility maximisers yields suboptimal User Equilibrium, or fully compliant drivers following centralised orders collectively reach a global goal of System Optimum, regardless of individual costs.

Previous system-optimal assignment concepts proposed for CAVs \cite{wang2018so} rely on the assumption of full compliance of individuals to reach global optimum (minimising times or emissions), with the individual driver perspective typically outside of scope. 
We argue this will be not acceptable by modern liberal societies, due to inherent inequalities of system optimal solution \cite{friedman1962capitalism}. Optimal solutions, while minimising the system-wide costs, neglect the costs of individuals, some of whom may experience (and do experience, as we show in our results) longer travel times (greater travel costs) compared to their colleagues (arriving to the same destination faster). As in other similar contexts, the system-wide benefits will not encourage individuals to sacrifice their own utility \cite{kahneman1984choices}. 
Free societies are likely to rebel, and individuals will opt out from any unfair system, jeopardising the potential benefits, reverting the system back to classical user equilibrium. To comply with system optimum, drivers would need either external incentives (e.g. mobility credit \cite{nie2012transaction} or pricing \cite{rambha2018marginal}), or a dictator who will reduce social score for non-compliance \cite{li2025cooperative}, or a new different class of solutions.
To exploit the arising opportunity, a new paradigm and theory bridging between the two concepts of network assignment is needed.

We show that, with a slight change of perspective, the fairness can be preserved at system optimum.
We depart from two assumptions. The first is that the gains from coordination shall be not only global, but also individual and fairly distributed.
Second, CAV users, if properly informed, should be happy to accept day-to-day variability if their average travel times are reduced in the long term. 
This requires, however, shifting the perspective from a single day of commute (as typically in traffic assignment context), to the series of days. To reach the system optimum on a given day, drivers need to coordinate (e.g. via centralised CAV fleet dispatcher), yet to make the assignment fair, this coordination needs to span over several days, like in the following example:



\begin{tcolorbox}[colback=gray!10,colframe=gray!30, coltitle=black, title = Wardopian Cycle - illustrative example]
Three drivers use capacitated routes to get from a common origin to a common destination departing at the same time. Uncoordinated, each of them travels 5 minutes (15 minutes in total).
Suppose that, if coordinated, it can be reduced to an average of 4 minutes (12 minutes in total), which will yield lower externalities (fuel, noise, etc.). This, despite coordination, requires one driver to travel 6 minutes instead of 3, like the remaining two. 
Why would the driver travelling 6 minutes agree to that? This clearly violates the Wardrop first principle.

Here, the solution is a 3-day cycle: \emph{everyone for two consecutive days enjoys a faster route, followed with a one-day of travel on a longer route}. This way, the assignment is both optimal (average travel time is 4 minutes every day) and fair (it is equal among travellers after 3 days).
\end{tcolorbox}

In the above scheme, everyday drivers optimally load the network with a pattern that yields minimal costs. By following system-optimal assignment drivers on each day will experience different travel times and the burden of reducing system-wide costs will be distributed unequally: some drivers will arrive earlier, some later. This can be alleviated by how drivers are assigned to routes every day.
We call such a sequence of daily assignments a \emph{Wardopian Cycle}, demonstrate that it always exists and is finite. Drivers following the daily assignments after completing the Wardopian Cycle simultaneously experience: i) the same average travel times, ii) the system-optimal travel times (typically faster than User Equilibrium). This constitutes a new concept of traffic equilibrium which is both fair and optimal, denoted Cyclical User Equilibrium (CUE).

We argue that, if transparently demonstrated, fleet assignment systems based on our proposed Wardopian Cycles can become a widely accepted and adopted solution for future traffic with CAVs. Unlike in purely centralised settings, the system-wide performance under Wardopian Cycles can be improved with acceptance of individuals (who can still travel faster). The outcomes, daily monitored and explicitly reported to the drivers, demonstrating reduced travel times over the selfish user equilibrium baseline, would motivate users to remain in the system. With an increasing share of drivers registered to the system, it will scale, leading to increased savings,  
which may be then allocated to pursue non-individual goals (like emissions or noise). Operationalisation of the theoretical concepts introduced in this study can lead to improved traffic performance without sacrificed equity. 

This paper contributes with:
\begin{itemize}
\item demonstrating that the traffic assignments can remain fair at system optimum, which is achieved if the coordination between users spans beyond a single day,
\item showing that the Price-of-Anarchy can be eliminated in a socially acceptable manner, which creates a new postulated concept of traffic equilibrium: Cyclical User Equilibrium,
\item proving that travel time equalising cycles always exist and proposing series of methods to reduce their lengths and users' discontent,
\item generalising Wardropian Cycles with the concept of Inequity Minimising Daily Assignment, applicable in less strict conditions,
\item demonstrating the results in the real-case of Barcelona, replicated on four more cities,
\item opening the new avenue of research, hopefully leading to a practical CAV fleet assignment paradigms.
\end{itemize}



\begin{figure}[ht!]
\begin{center}
\includegraphics[width=\textwidth]{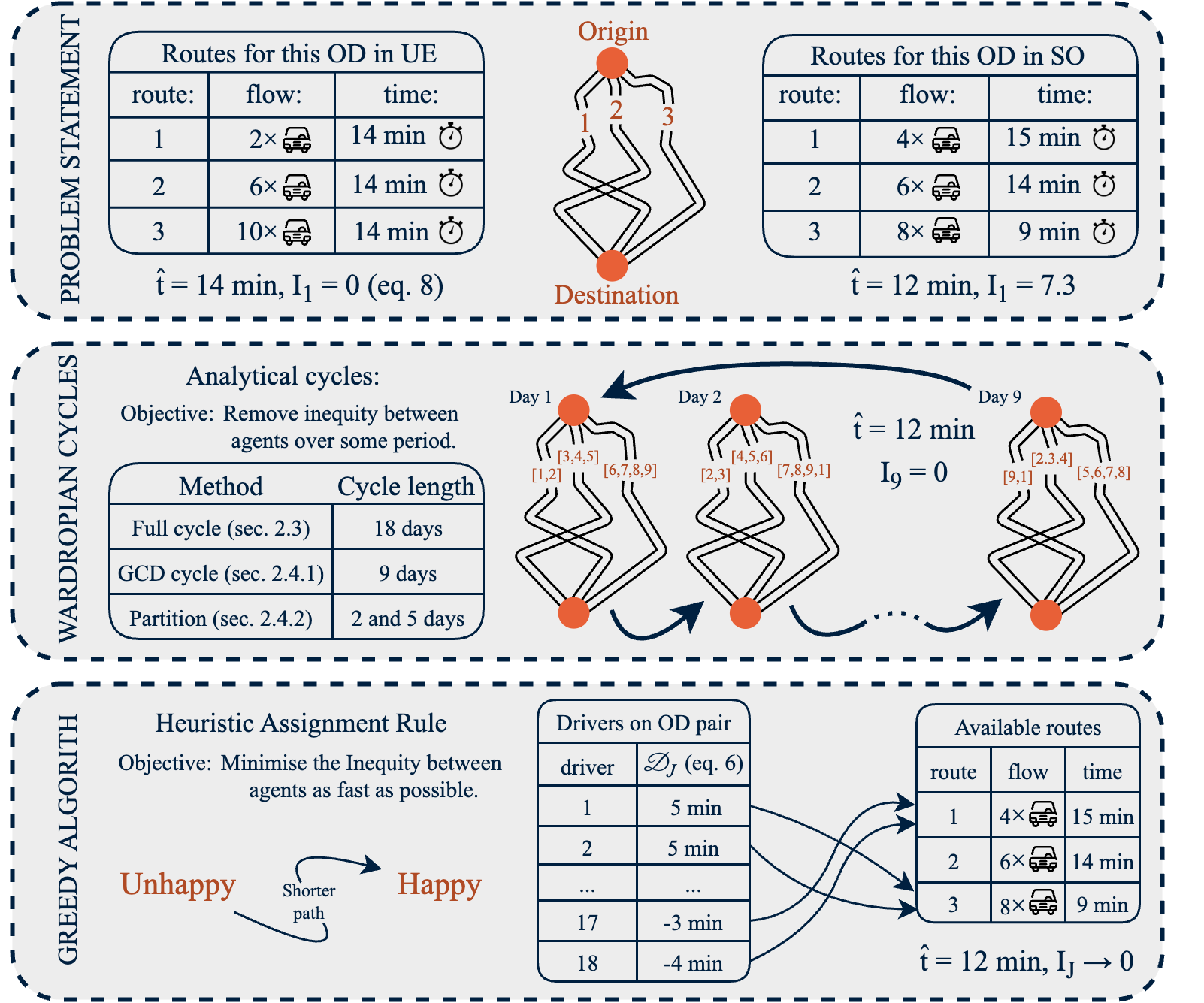}
\end{center}
\caption{
Wardropian Cycles concept at glance: \textbf{1.} In comparison to User Equilibrium, System Optimal assignment can be faster, but unfair across users. We measure this unfairness by Inequity (eq. \ref{eq_inequity}). \textbf{2.} To achieve fairness in SO, for each OD pair, we create a \emph{Wardrop Cyclical} assignment rule to achieve equal average travel times after couple days, which we call a \textit{Wardropian Cycle}. We show how can they be improved (shortened, equalised and more easily accepted by participants). \textbf{3.} A greedy daily assignment rule assigns better off drivers to longer routes day by day, decreases inequity of the system quicker, equalising the average travel times of agents on one OD pair.}
\label{fig:methodology}
\end{figure}





\subsection{Background} 
User Equilibrium (UE) and System Optimum(SO) \cite{Wardrop1952} are the two most important principles of traffic assignment. The former (UE) arises as a \emph{equilibrium} assignment similar to the Nash equilibrium \cite{Nash1951}, where no user can improve their travel time, while the latter, \emph{not being an equilibrium}, makes the system-wide cost minimal; this, however, for the price of inequity of travel times on different routes, which is inherently unfair. Since \cite{Wardrop1952}, it has been known that the system cost of UE may be higher than the optimal system cost of SO. As this is rather unsatisfactory, multiple attempts have been made, see \cite{chow2007trip, van2016user,  morandi2024bridging} for recent reviews, to reduce this difference, nowadays called \emph{price of anarchy} \cite{koutsoupias1999worst}, which could reach up to $30\%$ in modern networks \cite{youn2008price}, compare also \cite{roughgarden2004bounding, correa2005inefficiency} for theoretical bounds. 
Following \cite{morandi2024bridging}, we distinguish two main approaches to bridge the gap between UE and SO: 1) identification of SO with given user constraints and 2) relaxation of UE. 

In the former approach Constrained System Optima (CSO) are considered which are optima with some constraints on available paths, e.g. forbidden very long paths or paths whose travel times deviate excessively from UE times \cite{Jahn2000, Jahn2005SystemOptimal}. 
Relaxing the UE conditions, on the other hand, may consist in finding the Constrained User Equilibrium (CUE), e.g. \cite{zhou2012user}, where only routes short enough in terms of euclidean length are considered, or using a Bounded Rational User Equilibrium, proposed in \cite{mahmassani1987bounded}, in which the strict equality of travel times for different routes connecting a given OD pair needs to be satisfied only up to a small tolerance band, which enlarges the solution set and potentially allows for more efficient system-wide assignments; see e.g. \cite{guo2022managing} for a recent study.
Another approach involves applying Nash Welfare \cite{lujak2015route}, where the trade-off between fairness and efficiency, see also \cite{bertsimas2012efficiency} for a general discussion, is precisely quantified, see also \cite{CHELLAPILLA2023384} for a recent general bi-objective optimisation framework aimed at reducing congestion. 

A direct approach to reducing the price of anarchy, while preserving most individuals' freedom to choose paths was proposed in \cite{Stackelberg_in_practice}, where Stackelberg routing applied by only a fraction of vehicles, which predicts other users' selfish behaviour \cite{krichene2014stackelberg} brings the system closer to SO, see also \cite{roughgarden2001stackelberg,CorreaStackelberg, yang2007stackelberg, li2022differentiable} for theoretical bounds and algorithms. 
The downside of SO-driven Stackelberg routing, however, is that it requires a fraction of vehicles to \emph{always} use the less favourable routes, which is not feasible in general. 



The above-mentioned studies assumed that combining the optimum system performance of SO while preserving the fairness and individual equilibration characteristic for UE is impossible save for serendipitous network and demand configurations when UE indeed coincides with SO. This being true in the one-day perspective, it is no longer, as we show in this paper, valid in the multi-day perspective. Still, to the best of our knowledge, the idea of reducing inequity over a number of days while maintaining exact System Optimum is novel.
Note that the game-theoretical mixed-strategies e.g. \cite{mixedstrategies} applied by individual drivers share some similarities with cycles that we introduce, e.g. the average number of drivers on every route could correspond to System Optimum. Nevertheless, mixed strategies are not only hard to implement in practice, but also exhibit residual random deviations from system optimal assignment, in contrast to cyclical assignments. 

Let us here point out that our framework refines SO via multi-day assignments, and so the problem of excessively long routes in SO persists.  Nevertheless,  our multi-day cyclical setting can be adapted to refine and make fair \emph{any} initial assignment, e.g., CSO in place of SO if this is more desirable. 
Another issue, raised in the literature, is how to make the users stick to the guidance as by deviating on a single day, a user of our framework may improve its travel time. Possible measures include using sensors and monitoring to detect unruly users and punish them, for example, by expulsion from the system \cite{lujak2015route}. Note that in our framework, the disciplinary measures are only optional, and even without them compliance could reach $100\%$ as the construction of our guidance potentially benefits \emph{all} users compared to UE. 
Settings when every journey incurs a cost via \emph{compulsory} congestion pricing are considered in \cite{verhoef2002second, de2011traffic} and may achieve system optimum even in dynamic settings \cite{vickrey1969congestion}, the pricing scenarios, however, have many drawbacks and we steer clear of them towards simple to implement, although multi-day guidance systems.

The general idea of \emph{guiding} (a fraction of) the drivers such that the system is closer to System Optimum is not new and dates back to the 1970s \cite{kobayashi1979feasibility}, see also, e.g. \cite{harker1988multiple, van1990combined, van1991multiple} for guidance systems with multiple user classes. However, to our knowledge, this guidance was always considered as a one-day scenario and also typically involved a fraction of users, see, e.g. \cite{zhang2018mitigating}, who are happy to follow the central guidance even though on an individual level they are disadvantaged. In contrast, in this paper we seek to establish a \emph{multi-day} guidance pattern which preserves the system optimum and which is \emph{advantageous} to \emph{all} the users.

Wardrop User Equilibrium based on selfish users minimising their travel times is by far not the only viable selfish assignment rule and can be conceived as a first approximation of reality. When the unrealistic assumption of perfect knowledge of all travel costs is relaxed, or users have different individual tastes regarding different routes, the Stochastic User Equilibrium \cite{daganzo1977stochastic} naturally arises. While we do not explicitly consider SUE in this paper, a framework based on preserving system optimum while optimising individual utilities in a multi-day horizon is a natural extension of our framework to consider in the future. In this context, let us also mention Karma, e.g. \cite{elokda2025carma, riehl2024karma}, where individually varying day-to-day preferences are taken into account. We note, however, that Karma primarily aligns with users' urgencies, and whether it reaches SO depends on fine-tuning of parameters of Karma auctions, which could be challenging as it depends on population profile. In comparison, our framework \emph{enforces} System Optimum without any other prerequisites, disregarding the day-to-day variation of urgencies.  Still, even if all users have the same urgencies, the Karma assignment remains stochastic in contrast to our precisely tailored system optimal assignment aimed at reducing inequity and boosting voluntary compliance.   


Even though the route guidance system we propose is centralised, which has some drawbacks like communication challenges and privacy issues, see \cite {khanjary2012route}, compared to decentralised systems such as \cite{lujak2015route, riehl2024karma}, the routing decisions are taken only infrequently, for example, once per day in the case of a morning commute. Consequently, the communications problems are easy to cope with as compared with dynamic routing settings,  e.g. \cite{liang2014real}. Privacy does not have to be an issue in our framework either as the only piece of information that could be revealed is the route taken, in contrast to, e.g., daily preferences discussed for Karma \cite{riehl2024karma}.  




Finally, let us point out that even though our framework is implementable as a multi-day route guidance system without any particular connectivity between the drivers (indeed, every user needs only to update the route at most once per day or even download the whole routing schedule once every, say, month), it is particularly easy to put in practice in the Connected (not necessarily autonomously driving) Vehicles setting \cite{Bagloee_AV_impact, Bagloee_mixedUE_SO}. Nevertheless, we {\bf do not} require all vehicles to be automated or even connected to benefit from the proposed multi-day routing pattern, and a system based on Wardropian Cycles in theory could be deployed in our cities even now.  


The paper is organised as follows. 
In Section 2, we formalise the concept of Wardropian Cycle and propose the methodology to identify short and attractive cycles, followed with more general concepts of Inequity minimising Daily Assignment, Cyclical User Equilibrium, and a new postulated class of OD-fair System Optimal Assignments. We complement the illustrative examples introduced along with the methodology with the results from Barcelona in Section 3, coupled with cross-city analysis. We conclude and open future research avenues in Section 4. The appendix contains more formal proofs, remarks, and illustrations.

\section{Methodology}\label{sec:methodology}

Consider a set of $Q$ drivers on an origin-destination (OD) pair, assigned to $K$ paths with a binary matrix $A \in \mathcal{M}^{Q \times K}(\{0,1\})$ (confer def. \ref{def:Daily_assignment}). 
Each driver is assigned to a path, and drivers assigned to path $k$ form a path flow $Q_k$.
In capacitated networks, path flows yield (through the traffic flow model) travel times $t_k(Q_k)$\footnote{While we use here the macroscopic $BPR$ functions, any model with a good estimate of travel times from path flows is applicable in the following (including, with some generalisation, the non-deterministic microscopic traffic flow models)}. 

\begin{definition}
\textbf{Assignment} maps $Q$ travellers to paths $1 \dots K$ forming respective flows $Q_k\geq0$ ($\sum_{k\leq K}  Q_k=Q$) which yields path travel times $t_k$: 
     \begin{equation}
     [Q_k, t_k]_{k \leq K} = \text{Assignment}(Q) 
     \label{eq:Assignment}
 \end{equation}
\end{definition}

We consider two specific assignments: 
\begin{enumerate}
    \item UE, where travel times on all paths are equal for all paths: $t_k=t_{k'} \, \forall k,k' \leq K$ and all users experience equal travel time $\hat{t}$:
     \begin{equation}
     [Q_k^{\text{UE}}, t_k^{\text{UE}}]_{k \leq K} = \text{UE}(Q) 
     \label{UE}
 \end{equation}
    \item SO, where the total travel times $\sum_k Q_k t_k$ are minimised (at the system, not on the OD pair level), but the travel times on paths and for users may differ. This yields the average travel time $\hat{t}=\sum_k Q_k t_k /Q$ typically shorter than the UE:
     \begin{equation}
     [Q_k^{\text{SO}}, t_k^{\text{SO}}]_{k \leq K} = \text{SO}(Q) 
     \label{SO}
 \end{equation}
\end{enumerate}

We introduce the third class:
\begin{enumerate}
\item[3] \emph{Wardrop Cyclical (WC)} which is defined as a sequence of daily assignment matrices $\bigA = (A_1, \ldots A_n)$ over $n$ consecutive days. Each of the daily assignment matrices $A_i$ preserves the desired path flows - in our case given by the SO assignment (eq. \ref{SO}).
\end{enumerate}

We argue and demonstrate (in Section \ref{sec:upper_boundary}) that for any OD pair and any assignment, there exists a finite sequence of daily assignments $\bigA$ such that the average travel time for each driver on this OD pair is equal.

\begin{definition}\label{def:Daily_assignment}
    Each Assignment has its feasible daily realizations that we denote a \textit{daily assignment}, which is a matrix $A_j \in \mathcal{M}^{Q \times K}(\{0,1\})$ which satisfies: 
\begin{subequations}
\begin{align}
&\sum_{k=1}^K [A_j]_{i,k} = 1 \mbox { for every } i \in \{1,\dots, Q\}.\label{eq_A1} \\
&\sum_{i=1}^Q [A_j]_{i,k} = Q_k \mbox{ for every } k \in \{1,\dots, K\}, \label{eq_A2}
\end{align}
\end{subequations}


$[A_j]_{i, k} = 1$ indicates that driver $i$ is assigned to route $k$ on day $j$. Condition \eqref{eq_A1} imposes that each driver is assigned by matrix $A$ to exactly one route and condition \eqref{eq_A2} states that the number of drivers on each route corresponds to a given assignment result $Q_k$. From now on, to simplify the notation, we assume that $Q_k$ is the SO solution $Q^{SO}_k$. 

\end{definition}

\subsection{Measures of inequity for non-UE assignments}

All assignments, apart from $\text{UE}$ are unfair, and travel times will vary between paths, each driver will experience their own travel time, according to the assigned path $t_{k}$, potentially deviating from the daily average on this OD pair $\hat{t}$. We quantify this with a series of measures: the vector $\bigD_j$ denotes the deviation on day $j$, it is cumulated until day $J$ with $\mathcal{D}_J$, and the progression of users discontent is denoted $\mathcal{H}^i_L$. We use second norm of deviation to measure the Inequity up to day $I_J$ (normalized in $\bar{I}_J$).

The overarching objective of \emph{Wardropian Cycles} is to eliminate or reduce the above inequity measures. We hypothesise that the cycles which quickly minimise some of those measures will be more easily accepted and thus widely adopted.
Verifying this, however, requires a separate study that either deduces compliance formulas from other studies and discrete choice models, or conducts a dedicated stated preference study.

\begin{definition}[Time Deviations]\label{def:time_deviations}
For a given day $j$ and daily assignment $A_j$, we define the vector of travel time deviations $D_j$ by

\begin{eqnarray}
\label{eq:time_deviations}
    \bigD_j 
    &=&  A_j  \begin{bmatrix} t_1 \\ t_2 \\ \dots \\ t_K \end{bmatrix} -  \begin{bmatrix}\hat{t} \\ \hat{t} \\ \dots \\ \hat{t}\end{bmatrix},
\end{eqnarray}
where $t_k$ is the path $k$ travel time and
 $\hat{t}$ is the average OD travel time.
\end{definition}

\begin{definition}[Cumulative Time Deviations]
\label{def:cum_deviations}
    To track the time deviation (def. \ref{def:time_deviations}) experienced by agents throughout some period $\{1,\dots,J\}$, we introduce the cumulative time deviations $\mathcal{D}_J$ defined as follows:
    \begin{equation}\label{eq:cum_time_dev}
        \mathcal{D}_J = \sum_{j=1}^J \bigD_j
    \end{equation}
\end{definition}

\begin{definition}[Discontent Progression]
\label{def:cum_discontent}
For every day up to $J$ of the daily assignments, we define the \emph{discontent progression} $\mathcal{H}_J$ as a matrix defined as follows:
\begin{equation}
\label{eq:history}
\mathcal{H}_J = [\mathcal{D}_1 \, \sum_{j=1}^2 \mathcal{D}_j \, \ldots \sum_{j=1}^J\mathcal{D}_J]
\end{equation}
\end{definition}

\begin{definition}[Inequity]
For a given sequence of cumulative deviations $\mathcal{D}_j$ up to a given day $j$, we define Inequity as a mean of squared deviations (where $||\cdot||_2$ is the $l_2$ norm) time deviations over all agents:
\begin{equation}
\label{eq_inequity}
    I_J = (||\mathcal{D}_J||_2)^2/Q.
\end{equation}

To make inequity system-wide comparable, we normalise it by dividing it by the mean travel time on OD pair (which allows to compare between different pairs, days, realisation of stochastic demand, travel times, etc). 
$\bar{I}$ as:
\begin{equation}
\label{eq:intequitynormalized}
    \bar{I}_J = I_J/\hat{t}
\end{equation}
\end{definition}


\subsection{Wardropian Cycle}

\begin{definition}\label{def:wardropian_cycle}
\textbf{Wardropian Cycle} of length $n$ is a finite sequence of daily assignments $\bigA = (A_1, \dots, A_n)$ such that $\mathcal{D}_n = 0$ (eq. \ref{eq:cum_time_dev}). 
\end{definition}

\begin{remark}
A Wardropian Cycle emerges when all travellers experience, over a fixed period of days, the same average travel time, equal to the average travel time on their OD pair. In particular, if the daily assignment $A_j$ is \emph{system optimal} every day,  travellers can reach better average time than in the User Equilibrium. 
The drivers, following daily assignments, are said to be in \emph{Wardrop Cyclical (WC)} assignment.
\end{remark}

The overview of methodology is depicted in Fig. \ref{fig:sect2-plan}.
In the following subsections, we first prove that Wardropian Cycles always exist and are finite.
Then, we show how to decrease their length, as a shorter cycle would make it easier to encourage drivers to follow assignments. 
Sections \ref{sub:opt_cycle} and \ref{sec:compliance} develop methods to describe and minimise the cost that individual agents pay by participating in a cycle. 
Later, in Section \ref{sec:greedy}, we describe an algorithm that leads to a more practical implementation of Wardopian cycles. The following part describes the switch from continuous traffic flow distributions to our discrete case.
In Section \ref{sec:new_wardrop}, we propose a new paraphrase of Wardrop principles and introduce a Cyclical Wardrop Equilibrium. 
We conclude with adding the new constraint to System Optimal assignments, compatible with Wardropian Cycles (in Sec. \ref{sec:CSO}) and generalising the concept to a \emph{bit} of traffic in Sec. \ref{sec:bit}.


\begin{figure}[ht!]
\begin{center}
\includegraphics[width=0.7\textwidth]{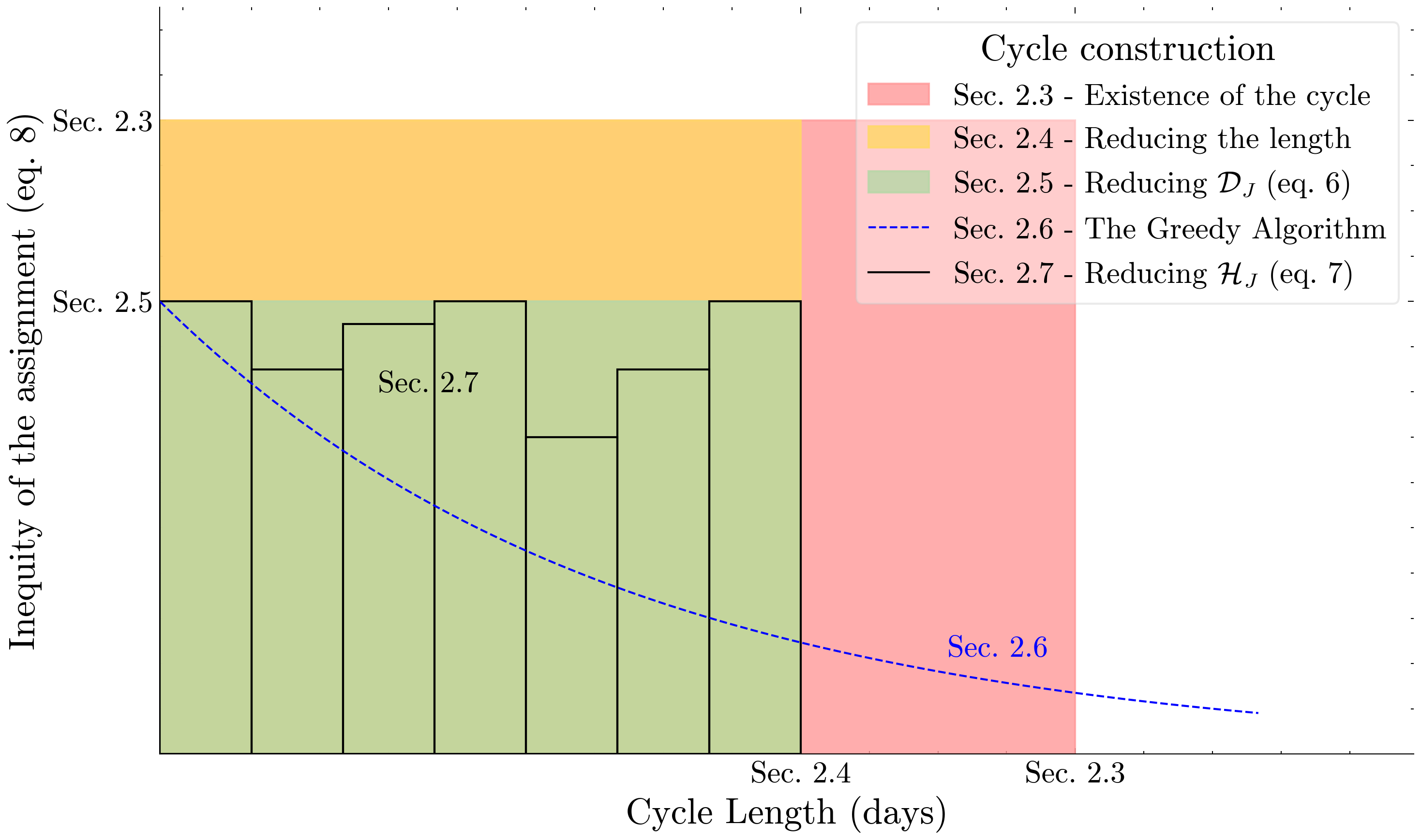}
\end{center}
\caption{\textbf{Methodology overview}: In Sec. 2.3 we prove existence of cycles, in 2.4 we show techniques to reduce their length, in 2.5 we focus on reducing the variance of travel times and generalise with greedy algorithm in Sec. 2.5. Methodology section concludes with an analysis of cumulated disutility and methods to mitigating its excess.}
\label{fig:sect2-plan}
\end{figure}

\subsection{Finite cycles always exist} 
\label{sec:upper_boundary}

For any OD pair and a given Assignment, a Wardropian Cycle always exists, we prove it with the existence of a Wardropian Cycle of length equal to the number of drivers $Q$ on this OD pair.  

\begin{definition}
    \textbf{Shift matrix} for an assignment $A$ is a permutation matrix $P$ defined as:
    $$P =
    \begin{bmatrix}
    0 & 1 & 0 & \dots & 0 & 0 \\
    0 & 0 & 1 & \dots & 0 & 0 \\
    \vdots & \vdots & \vdots & \ddots & \vdots & \vdots \\
    0 & 0 & 0 & \dots & 0 & 1 \\
    1 & 0 & 0 & \dots & 0 & 0
    \end{bmatrix},$$
    where number of rows and columns are equal to the number of columns and rows in $A$, respectively. 
\end{definition}

\begin{remark}
The permutation matrix $P$ shifts user assignment by one, i.e. given that travellers are assigned to routes $r_{i_1}, r_{i_2}, \ldots, r_{i_Q}$, on the following day they are assigned to routes $r_{i_2}, r_{i_3}, \ldots, r_{i_Q}, r_{i_1}$, respectively.
\end{remark}

\begin{proposition}
\label{Prop1}
For any initial assignment $A_1$, the sequence of daily assignments $A_1, A_2, \dots A_Q$ defined as
\begin{equation}
A_{j+1} = P^j A_1 \mbox{ for } j=1, \dots, Q-1,     
\end{equation}
where $P$ is a shift matrix for $A_1$, is a Wardropian Cycle of length $Q$.
\end{proposition}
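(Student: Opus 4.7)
The plan is to exploit the fact that the cyclic shift $P$ has order $Q$ when acting on $Q$-dimensional vectors: over $j = 1, \ldots, Q$, each driver's assignment rotates through every one of the $Q$ day-1 row assignments exactly once. So every driver accumulates the same multiset of daily travel times, and hence the same total, which forces $\mathcal{D}_Q = 0$.

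I would begin by checking that each $A_j = P^{j-1} A_1$ is a valid daily assignment in the sense of Definition~\ref{def:Daily_assignment}. Since $P$ is a permutation matrix, left-multiplying by $P^{j-1}$ merely permutes the rows of $A_1$. Thus each row of $A_j$ is still a $0/1$ indicator with a single $1$ (preserving \eqref{eq_A1}), and every column sum is invariant under row permutation, so the path-flow constraint \eqref{eq_A2} with the prescribed $Q_k$ continues to hold.

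For the main computation, let $\mathbf{t} = (t_1, \ldots, t_K)^T$ and $\mathbf{v} = A_1 \mathbf{t} \in \mathbb{R}^Q$, so that on day $j$ the travel-time vector is $A_j \mathbf{t} = P^{j-1} \mathbf{v}$, and hence $\bigD_j = P^{j-1} \mathbf{v} - \hat{t}\,\mathbf{1}$ by Definition~\ref{def:time_deviations}. The key identity is that $\sum_{j=0}^{Q-1} P^j$ equals the $Q\times Q$ all-ones matrix, so $\sum_{j=1}^{Q} P^{j-1} \mathbf{v} = \bigl(\sum_{i=1}^{Q} v_i\bigr)\mathbf{1}$. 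The sum of entries of $\mathbf{v}$ equals $\mathbf{1}^T A_1 \mathbf{t} = \sum_k Q_k t_k = Q\hat{t}$ by \eqref{eq_A2} and the definition $\hat{t} = \sum_k Q_k t_k / Q$. Summing the deviations therefore yields $\mathcal{D}_Q = Q\hat{t}\,\mathbf{1} - Q\hat{t}\,\mathbf{1} = 0$, establishing a Wardropian Cycle of length $Q$ in the sense of Definition~\ref{def:wardropian_cycle}.

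The argument is essentially clean algebraic bookkeeping once conventions are fixed, so there is no deep obstacle. The main trap is dimensional and notational: one must read the shift matrix as acting on the $Q$ rows of $A_1$ (so $P$ is $Q\times Q$ and permutes \emph{drivers}, not \emph{routes}), which is the reading consistent with the remark on shifting user assignments. Once that is pinned down, the proof reduces to the elementary observation that averaging a vector over all of its cyclic shifts returns the scalar mean.
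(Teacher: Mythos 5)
Your proof is correct and follows essentially the same route as the paper: both sum the daily assignments over the $Q$ days, use the cyclic structure of $P$ (your identity $\sum_{j=0}^{Q-1}P^{j}$ being the all-ones matrix is exactly what makes $\sum_j A_j$ have every row equal to $(Q_1,\dots,Q_K)$), and conclude each driver's total travel time is $Q\hat{t}$, i.e. $\mathcal{D}_Q=0$. Your explicit check that each $P^{j-1}A_1$ satisfies \eqref{eq_A1}--\eqref{eq_A2} is a small, welcome addition the paper leaves implicit.
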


\begin{proof}[Proof of Proposition \ref{Prop1}]
Let $T=[t_1, \ldots, t_K]^T$. The vector of total travel times of drivers can be expressed as
\label{proof_1}
$$\left(\sumQ A_j\right) T = 
\begin{bmatrix}
Q_1 & \dots & Q_K \\
 & \ddots &  \\
Q_1 & \dots & Q_K
\end{bmatrix} T =
\begin{bmatrix}
    \sum_{i=1}^K Q_i t_i \\
    \dots \\
    \sum_{i=1}^K Q_i t_i 
\end{bmatrix} = Q \begin{bmatrix}\hat{t} \\ \hat{t} \\ \dots \\ \hat{t}\end{bmatrix}. 
$$

After Q days, each driver will be assigned to route $k$ exactly $Q_k$ times, hence their mean travel time will be equal to $\hat{t}$.
\end{proof}



\subsection{Reducing the length of a cycle} 
\label{sec:GCD}
In this section, we propose a construction to make cycles shorter than $Q$ based on the greatest common divisor and a flow partition.

\subsubsection{Reduction by the greatest common divisor}
\label{sec:optim_flow}
The first method to reduce the length of a Wardropian Cycle is founded on the greatest common divisor (GCD). 
Instead of permuting the cycle by one shift daily ($\tau_1$), we use the GCD of path flows $M = \gcdd{Q_1, Q_2, \dots, Q_k}$ and shift drivers by this number of steps every day. We can do this by replacing the shift matrix $P$ in the cycle formulation with its $M$-th power: $P^M$.
\begin{proposition}
    For any initial assignment $A_1$ and $M = \gcdd{Q_1, Q_2, \dots, Q_k}$, we define daily assignments $A_2, \dots A_{Q/M}$ by
\begin{equation}
A_{j+1} = ( P^{M} )^j A_1 \mbox{ for } j=1, \dots, Q / M-1.      
\end{equation}
The sequence $A_1, \dots A_{Q/M}$ is a Wardropian Cycle of length $Q/M$.
\begin{proof}
Similarly to proof of Proposition \ref{Prop1}, we calculate the total travel time:
$$\left(\sum_{j=1}^{Q/M} A_j\right) T = 
\begin{bmatrix}
Q_1/M & \dots & Q_K/M \\
 & \ddots &  \\
Q_1/M & \dots & Q_K/M
\end{bmatrix} T =
\begin{bmatrix}
    \frac{1}{M}\sum_{i=1}^K Q_i t_i \\
    \dots \\
    \frac{1}{M}\sum_{i=1}^K Q_i t_i 
\end{bmatrix} = Q/M \begin{bmatrix}\hat{t} \\ \hat{t} \\ \dots \\ \hat{t}\end{bmatrix} 
$$
Instead of travelling each route $Q_k$ times, each driver is assigned to it exactly $Q_k/M$ times. Since, by the definition, for every $k$ this value is an integer, the resulting cycle is a Wardropian Cycle. 
\end{proof}

\end{proposition}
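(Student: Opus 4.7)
The plan is to mirror the proof of Proposition \ref{Prop1}, replacing $P$ with $P^M$. First I would note that $M$ divides $Q = Q_1 + \cdots + Q_K$ because $M$ divides each summand, so $Q/M$ is an integer and the sequence $A_1, P^M A_1, \ldots, (P^M)^{Q/M - 1} A_1$ is well-defined; moreover $(P^M)^{Q/M} = P^Q = I$, so after $Q/M$ days the schedule loops back to $A_1$ and the construction is genuinely cyclic.

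The main computation is to verify
\begin{equation*}
\sum_{j=1}^{Q/M} A_j = \begin{bmatrix} Q_1/M & \cdots & Q_K/M \\ \vdots & \ddots & \vdots \\ Q_1/M & \cdots & Q_K/M \end{bmatrix},
\end{equation*}
from which the argument of Proposition \ref{Prop1} applies verbatim: multiplying by the travel-time vector $T = [t_1,\ldots,t_K]^T$ gives the vector of total travel times $(Q/M)\,\hat{t}\,(1,\ldots,1)^T$, so after $Q/M$ days each driver's average travel time equals the OD-average $\hat{t}$ and $\mathcal{D}_{Q/M}=0$, which is precisely the defining property of a Wardropian Cycle (Definition \ref{def:wardropian_cycle}). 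Integrality of the entries $Q_k/M$, needed for the right-hand side to be a legitimate sum of $Q/M$ binary matrices, is exactly what the choice $M = \gcdd{Q_1, \ldots, Q_K}$ guarantees.

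The main obstacle, and the point at which the proof of Proposition \ref{Prop1} does not translate for free, is justifying the block structure of $\sum_j A_j$. Under $P^M$, driver $i$ only traverses the orbit $\{i, i+M, \ldots, i+(Q/M-1)M\}\pmod{Q}$, which is a single coset of the subgroup $\langle M\rangle \leq \mathbb{Z}/Q\mathbb{Z}$ of size $Q/M$, not the full set of $Q$ rows of $A_1$. The block identity above therefore requires that the $Q_k$ drivers initially assigned to route $k$ are spread evenly across the $M$ cosets, contributing exactly $Q_k/M$ per coset. This coset-uniformity is not automatic for an arbitrary $A_1$; the cleanest way to secure it is to take $A_1$ to be the canonical block assignment (drivers $1, \ldots, Q_1$ on route $1$, next $Q_2$ on route $2$, and so on), since then every arithmetic progression with common difference $M$ meets each contiguous block of length $Q_k$ in exactly $Q_k/M$ positions. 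I would therefore either make this canonical choice explicit in the statement or, equivalently, relabel drivers before applying the shift $P^M$, after which the computation collapses to the same row-sum argument that closed Proposition \ref{Prop1}.
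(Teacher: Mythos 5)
Your proposal follows the same row-sum computation as the paper, but your scrutiny of the block structure of $\sum_j A_j$ is exactly right and goes beyond what the paper verifies. Under $P^M$ the orbit of driver $i$ consists only of the rows congruent to $i$ modulo $M$, so row $i$ of $\sum_{j=1}^{Q/M} A_j$ is the sum of the rows of $A_1$ over that residue class; the constant matrix with rows $[Q_1/M,\dots,Q_K/M]$ is obtained only if every residue class contains exactly $Q_k/M$ drivers of route $k$ in $A_1$. For an arbitrary $A_1$ this can fail: take $Q=4$, $Q_1=Q_2=2$, hence $M=2$, and let drivers $1,3$ be assigned to route $1$ and drivers $2,4$ to route $2$; then in the sequence $A_1, P^2A_1$ driver $1$ travels route $1$ on both days, so $\mathcal{D}_2\neq 0$ whenever $t_1\neq t_2$ and the construction is not a Wardropian Cycle. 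The paper's proof passes over this point (``each driver is assigned to it exactly $Q_k/M$ times'') and implicitly relies on the canonical contiguous-block initial assignment, for which your coset argument --- an interval of length $Q_k$, a multiple of $M$, meets every residue class mod $M$ in exactly $Q_k/M$ positions --- closes the gap. So your proof is correct once the canonical (or any coset-uniform) choice of $A_1$ is made explicit, and your recommendation to add that hypothesis to the statement, or to relabel drivers before applying $P^M$, is the right repair; the integrality and periodicity observations ($M\mid Q$, $(P^M)^{Q/M}=I$) are also welcome additions that the paper leaves implicit.
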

While $M=1$ brings us back to the solution from Section \ref{sec:upper_boundary}, often $M>1$ (as presented in Fig. \ref{fig:methodology}), which leads to significantly reduced cycle lengths.

\subsubsection{Reduction by flow partition} \label{sec:optim_times}
Further reducing a cycle length relies on splitting $Q$ drivers of each OD pair,  into groups $J_1, \dots, J_S$ of equal mean travel time. If such a partition is found, any Wardopian cycle of daily assignments can be applied within the group.

Now we no longer try to arrange every driver on the same cycle; instead, we create several, shorter ones that all have equal average time, but not necessarily equal length, nor route set. 
We further note that we can group routes with equal travel times together, as it is irrelevant whether a driver takes one route or another, the key factor being the travel time itself.
If so, we can assume that route times are pairwise different.
Therefore, instead of each route $r_k$ of flow $Q_k$, we have a set of $Q_k$ unit routes (routes of flow $1$) of equal time $t_k$. 
This leads to a multiset, where each route $r_k$ is present $Q_k$ times.
Hence, we can focus solely on travel times, while the flow information is incorporated in the multiset structure.
Such a representation allows us to divide a route flow into several sub-cycles, while preserving the SO assignment. Formally, we want to partition the multiset of size $Q$ into subsets $J_s$, where route $r_i$ (of flow $1$, present $Q_k$ times in the multiset) has time $t_i, i\leq Q$, so that

\begin{equation}
    \sum_{j \in J_s} \frac{t_j}{|J_s|} = \sum_{i \in [Q]} \frac{t_i}{Q} = \hat{t} \quad \forall s \in S.
\end{equation}

We argue that a Wardropian Cycle can be constructed for each group separately leading to a fair assignment for this OD pair.
This holds because every driver in a Wardropian Cycle experiences mean travel time equal to mean daily travel time of all routes in the cycle, so will have travel time equal to the mean time of their partition $J_s$, hence equal for every driver on this OD pair. Unfortunately, the following claim is true: 

\begin{claim}
\label{cl:np_hard}
    Dividing a set of routes with time and flow into subsets of equal mean time is $NP$-hard.
\end{claim}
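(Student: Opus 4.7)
The plan is to give a polynomial-time reduction from the classical \textbf{PARTITION} problem (given positive integers $a_1, \dots, a_n$ with $\sum_i a_i = 2S$, decide whether some subset sums to $S$; NP-complete by Karp/Garey--Johnson) to the decision version of the claim: given route times $t_1, \dots, t_Q$, does there exist a partition of $\{1,\dots,Q\}$ into at least two non-empty parts, each with mean time equal to $\hat{t} = \tfrac{1}{Q}\sum_i t_i$?

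First I would reformulate. A partition into $\ge 2$ parts of mean $\hat{t}$ exists iff there is a non-trivial proper subset $J \subsetneq \{1,\dots,Q\}$ with $\sum_{j \in J} t_j = |J|\,\hat{t}$: any ``yes'' partition yields such a $J$ by picking one block, and conversely, given such $J$, the pair $(J, J^c)$ both have mean $\hat{t}$. After centering $s_i := t_i - \hat{t}$ (so $\sum_i s_i = 0$), this becomes the \emph{zero-sum subset} question on signed integers.

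Then I would build the reduction. From a PARTITION instance $(a_1,\dots,a_n; S)$, construct $Q = n+2$ unit-flow routes with times
\begin{equation*}
t_i = a_i + C \quad (i=1,\dots,n), \qquad t_{n+1} = t_{n+2} = C - S,
\end{equation*}
where $C > S$ is any constant making every $t_i$ positive. Summing yields $\sum_i t_i = (n+2)C$, so $\hat{t} = C$, and the centered values are $s_i = a_i$ for $i \le n$, and $s_{n+1}=s_{n+2}=-S$. Correctness is verified in both directions: a PARTITION witness $J^{\star} \subseteq [n]$ with $\sum_{J^{\star}} a_j = S$ gives the zero-sum subset $J^{\star} \cup \{n+1\}$. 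Conversely, for a non-trivial proper zero-sum subset $J$, one enumerates the four cases for $J \cap \{n+1, n+2\}$: the ``neither'' case forces $\sum_{J \cap [n]} a_i = 0$, impossible for positive $a_i$ unless $J=\emptyset$; the ``both'' case forces $\sum_{J \cap [n]} a_i = 2S$, hence $J \supseteq [n]$ and thus $J = [Q]$, contradicting properness; the remaining case (exactly one of $n+1, n+2$ in $J$) yields $\sum_{J \cap [n]} a_i = S$, a PARTITION witness.

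The main obstacle is clarifying \emph{which} decision problem underlies the (informal) claim, since ``dividing into subsets of equal mean time'' admits several readings (is the trivial one-block partition allowed? is a minimum number of parts prescribed? are subset cardinalities constrained?). My strategy is to argue that the most permissive formulation---does \emph{any} non-trivial equal-mean partition exist---is already NP-hard by the reduction above, which immediately transfers hardness to any stricter variant that would appear natural in the cycle-shortening context of Section~\ref{sec:optim_times}. A minor secondary point is the role of the integer flow $Q_k$: taking unit flows $Q_k=1$ suffices for the reduction, and if one insists on nontrivial multiplicities one may pad with copies of ``neutral'' routes of time $\hat{t}$ without affecting either direction of the equivalence.
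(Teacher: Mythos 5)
Your reduction is correct, and it rests on the same core observation as the paper's proof: after centering the times by the OD mean $\hat t$, a non-trivial equal-mean partition is exactly a decomposition into zero-sum subsets, so hardness is inherited from the subset-sum family. The difference is in how each argument reaches that zero-sum formulation. The paper argues in the contrapositive style (``a polynomial algorithm for \texttt{MEAN-PARTITION} would solve \texttt{SUBSET-SUM} with target $0$''), and simply asserts ``w.l.o.g.\ the ground set sums to zero'' before shifting every element by $\hat t$ --- i.e.\ it takes the NP-hardness of the zero-sum-proper-subset variant as given. Your two-sentinel gadget ($t_{n+1}=t_{n+2}=C-S$ appended to $a_i+C$) supplies precisely that missing link: it is an explicit many-one reduction from standard \textsc{Partition}, and your four-way case analysis on $J\cap\{n+1,n+2\}$ makes both directions airtight, so your write-up is in this respect more self-contained than the paper's. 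Your remarks on the formulation (requiring a non-trivial partition, taking unit flows $Q_k=1$, padding with routes of time $\hat t$) also align with the paper's multiset-of-unit-routes setting in Section~\ref{sec:optim_times}. The only element the paper has that you omit is the (easy) verification that \texttt{MEAN-PARTITION} lies in NP, which the paper includes to claim NP-completeness; it is not needed for the NP-hardness statement you were asked to prove.
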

\begin{proof}
    See Appendix \ref{app:NPhard} (reduction from the Subset Sum Problem).
\end{proof}

We note that this approach is applicable only in scenarios involving small flows and travel times, as the associated computational cost causes this solution to be impractical on a scale of an entire city. Nevertheless, multiple heuristics can be applied to the partitioning problem to achieve at least close approximation of such partition in polynomial time.



\subsection{Bounding Time Deviations in Wardropian Cycles}
\label{sub:opt_cycle}

In addition to being short, the cycles shall be fair among participants. We note that 
the 
order of assignments $(A_1, \ldots, A_n)$ matters for the drivers. There is a significant difference between being forced to travel longer for many days with a promise of compensation at the end, and alternating longer and shorter routes day by day to arrive faster on average. In this section, we consider how to reduce inequity throughout the cycle, not only at the end. We introduce the following general optimisation problem. Here, we search for a reordering $\sigma$ of daily assignments $(A_1, \ldots, A_n)$ to $(A_{\sigma(1)}, \ldots, A_{\sigma(n)})$ (within a set of all possible permutations $S_n$ of a daily assignment matrices) such that any cumulative deviation from the mean for any traveller is minimised.

\begin{definition}[Restricted cycle inequity optimisation problem]
\label{def:opt_single}
    Let $(A_1, \ldots, A_n)$ be a sequence of daily assignments resulting in daily travel time deviations $(\bigD_1, \ldots, \bigD_n)$ and $||\cdot||_{\infty}$ denotes the $l_{\infty}$ norm.
    The \emph{restricted cycle inequity optimisation problem} is given by:
    \begin{equation}
    \label{eq:opt_single}
        \min_{\sigma \in S_n} \left\{ \max_{n_s, n_l \in \{1,\dots, n\}} \left|\left| \sum_{j=0}^{n_l-1} \bigD_{\sigma(n_s + j)}  \right|\right|_{\infty} \right\},
    \end{equation}
where $S_N$ is the set of all  cyclical permutations of $\{1,2, \dots, N\}$, i.e. permutations $\sigma$ for which $\sigma(n) := \sigma(n-N)$ for $N < n \le 2N$.
\end{definition}


This problem turns out to be NP-hard, see \ref{Prop_cycleNPhard}.
Therefore, in Proposition \ref{Prop3} below we introduce a heuristic method to approximate the solution to problem \eqref{eq:opt_single} that limits the maximal inequity by the distribution of route times.


\begin{proposition}\label{Prop3}
    Let $\hat{t}$ denote the average single-day travel time of an agent in a Wardropian Cycle with daily assignments $\bigA = (A_1, \ldots, A_n)$ and let $t_{\max}$ and $t_{\min}$  denote the longest and the shortest single-day travel time in $\mathcal{A}$, respectively.
    Let $\hat{t}_{i, l}(\bigA) = \sum_{j \leq l} [A_jT]_i /l$ be the average travel time of agent $i$ after $l$ days of his assignment. 
    One can find a matrix $A'$ and the corresponding initial assignment $\bigA'= (A', PA',\dots,P^{n-2}A',P^{n-1}A')$, where $P$ is the shift matrix, such that
    \begin{equation} \label{eq:opt_all}
        \forall_{i \in Q} \forall_{l \leq n} \left|\hat{t}_{i, l}(\bigA') - \hat{t}\right|\leq \frac{t_{\max}- t_{\min}}{l}.
    \end{equation}
\end{proposition}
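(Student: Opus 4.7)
The plan is to identify $A'$ with a carefully ordered rearrangement of the multiset of per-driver route travel times, and then exploit the fact that under $\bigA' = (A', PA', \ldots, P^{n-1}A')$ every driver cycles through these times in the same cyclic pattern, only rotated. Writing $v_m$ for the travel time corresponding to row $m$ of $A'$, row $i$ of $P^j A'$ is row $(i+j)\bmod n$ of $A'$, so the cumulative travel time of driver $i$ after $l$ days equals the cyclic window sum $W_i(l) = \sum_{j=0}^{l-1} v_{(i+j)\bmod n}$. Because the multiset $\{v_1,\ldots,v_n\}$ is entirely determined by the path flows (each route $k$ contributes $Q_k$ copies of $t_k$), the only freedom is the order of the rows of $A'$, and the question reduces to: can this multiset be arranged cyclically so that every window sum of length $l$ deviates from $l\hat{t}$ by at most $t_{\max}-t_{\min}$?

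To construct the arrangement, I pass to residuals $u_m := v_m - \hat{t}$, which satisfy $\sum_m u_m = 0$ and $u_m \in [t_{\min}-\hat{t},\,t_{\max}-\hat{t}]$, and I set $J := [-(\hat{t}-t_{\min}),\,t_{\max}-\hat{t}]$. I claim there is an ordering $\pi$ whose running sums $r_m := \sum_{k=1}^m u_{\pi(k)}$ lie in the \emph{asymmetric} interval $J$ for every $m$, with $r_0 = r_n = 0$. The construction is a one-dimensional Steinitz-style greedy: at step $m$, if $r_{m-1}\ge 0$ pick any remaining residual with $u_{\pi(m)}\le 0$, and otherwise pick any remaining one with $u_{\pi(m)}\ge 0$. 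Such a residual always exists because the remaining $u$'s sum to $-r_{m-1}$, which has the opposite sign of $r_{m-1}$. A short induction then confirms the invariant: if, say, $r_{m-1}\in[0,\,t_{\max}-\hat{t}]$ and $u_{\pi(m)}\in[-(\hat{t}-t_{\min}),\,0]$, then $r_m \in [r_{m-1}-(\hat{t}-t_{\min}),\,r_{m-1}] \subseteq J$. I then take $A'$ to be any daily assignment matrix whose $m$-th row assigns the relevant driver to the route of time $v_{\pi(m)}$, which automatically respects the path-flow counts $Q_k$.

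To finish, the bound on $r_m$ transfers to the stated bound on $|\hat{t}_{i,l}(\bigA')-\hat{t}|$. For any driver $i$ and any $l\le n$, the cumulative deviation $W_i(l) - l\hat{t} = \sum_{j=0}^{l-1} u_{(i+j)\bmod n}$ equals $r_{i+l-1}-r_{i-1}$ in the non-wrap case, and $r_{(i+l-1)\bmod n}-r_{i-1}$ (using $r_n = 0$ to telescope) in the wrap case; in both cases it is a difference of two elements of $J$, so its absolute value is at most $(t_{\max}-\hat{t})+(\hat{t}-t_{\min}) = t_{\max}-t_{\min}$. Dividing by $l$ yields exactly the claimed bound.

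The principal obstacle is the asymmetry: a naive symmetric greedy invariant $r_m\in[-M,M]$ with $M=\max(t_{\max}-\hat{t},\,\hat{t}-t_{\min})$ only produces a cyclic-difference bound of $2M$, which can strictly exceed $t_{\max}-t_{\min}$ whenever $\hat{t}$ is far from the midpoint of $[t_{\min},t_{\max}]$. Maintaining the tighter asymmetric invariant $r_m\in J$ is precisely what contracts $2M$ into the sharp constant $(t_{\max}-\hat{t})+(\hat{t}-t_{\min})=t_{\max}-t_{\min}$. A minor secondary point is verifying sign-availability at every greedy step, which follows immediately from $\sum_m u_m = 0$.
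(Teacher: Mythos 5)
Your proof is correct and takes essentially the same route as the paper's: a sign-balancing greedy ordering of the mean-centered route times that keeps all running sums in the asymmetric interval $[-(\hat{t}-t_{\min}),\,t_{\max}-\hat{t}]$, after which each agent's rotated (cyclically shifted) schedule has cumulative deviation equal to a difference of two such running sums, hence bounded by $t_{\max}-t_{\min}$, and dividing by $l$ gives the claim. Your explicit handling of the wrap-around via $r_n=0$ and your statement of the greedy rule (pick an element of sign opposite to the current running sum) are in fact cleaner than the paper's write-up, whose step 2 contains an apparent sign swap but clearly intends the same rule.
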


We note that although in the general case, the route times might differ significantly, in practice, however (see Fig. \ref{fig:longest_shortest_diff}) these differences are typically small and $t_{max} - t_{min}$ is reasonably bounded. Moreover, since the method presented in this paper is assignment-agnostic, we could use any of the methods of constrained system optimum assignments further decreasing this difference, for example from: \cite{Jahn2005SystemOptimal} and \cite{ANGELELLI2020105016}.

\begin{proof}[Proof of Proposition \ref{Prop3}]
The sketch of the proof goes as follows.
First, we construct a cycle of routes for the first agent that satisfies a strict bound. Then, we assign other agents to the same cycle, yet starting at different positions. We then show that such assignments satisfies bound from eq. \ref{eq:opt_all}. The individual assignments uniquely define the $\mathcal{A}'$.

Let $\mathcal{T}' = (t'_1, t'_2,\dots, t'_n)$ be a sequence of times that agents in a cycle experience during single cycle completion (time $t_k$ of each route $r_k$ is included exactly $q_k$ times). W.l.o.g. we can say: $(t'_1 \leq t'_2 \leq \dots \leq t'_n)$ and subtract $\hat{t}$ from each value, $\mathcal{T} = (t_1, \dots, t_n) := (t'_1 - \hat{t}, \dots, t'_n - \hat{t})$. Note that after subtraction $\sum_{t \in \mathcal{T}} t = 0$. Now we split the series $\mathcal{T}$ into $\mathcal{T}^- = \{t_i \leq 0\, |\, t_i \in \mathcal{T}\}$ and $\mathcal{T}^+ = \{t_i > 0\, |\, t_i \in \mathcal{T}\}$.

Let $\mathcal{O} = (o_1, \dots, o_n)$ be a sequence from sets $\mathcal{T}^+$ and $\mathcal{T}^-$ ordering the routes of the first agent. 
We write $\hat{t}_{1,l} = \frac{\sum_{i=1}^l o_i}{l}$.
We seek such $\mathcal{O}$ that satisfies:

\begin{equation}
\label{eq:order_bound}
    \hat{t} - t_\delta^-/l \leq \hat{t}_{1,l} \leq \hat{t} + t_\delta^+ /l \quad \forall_{l \leq n} 
\end{equation}

where: 
$t_\delta^+ \coloneqq t_{\max} -\hat{t}$ and $t_\delta^- \coloneqq \hat{t} - t_{\min} $.

To create a sequence satisfying eq. \ref{eq:order_bound} we design the following algorithm:
\begin{enumerate}
    \item Take first element  $t_1$ from $\mathcal{T}^+$ and set it as a first element in the sequence $\mathcal{O}$. Let the $\Sigma_\mathcal{O}$ be the sum of all elements currently in $\mathcal{O}$. Define $\mathcal{T}_1^+ = \mathcal{T}^+ \setminus \{t_1\}$ and $ \mathcal{T}_1^- = \mathcal{T}^- $.
    \item To find element $o_i$, pick the first element $t_1$ from $\mathcal{T}_{i-1}^+$ if $\Sigma_\mathcal{O}>0$ and from $\mathcal{T}_{i-1}^-$ if $\Sigma_{\mathcal{O}}<0$.  If $\Sigma_\mathcal{O} = 0$, we can take from either. 
    Place this element at the end of $\mathcal{O}$. Set $\mathcal{T}_{i}^+ = \mathcal{T}_{i-1}^+ \setminus \{t_1\}$ and $\mathcal{T}_{i}^{-} = \mathcal{T}_{i-1}^{-}$ if the element was from $\mathcal{T}_{i}^+$ and reverse otherwise.
    \item Repeat step 2 until both $\mathcal{T}^+$ and $\mathcal{T}^-$ are empty.
\end{enumerate}
To prove the correctness of the algorithm we need to prove, that in any moment we can perform step 2 of the algorithm and it does not exceed the bound from eq. \ref{eq:order_bound}. 

First part is straightforward.
We seek an sequence $\mathcal{O}$ of length $n$, combining the elements from $\mathcal{T}^+$ and $\mathcal{T}^-$. Since the sum of elements in $\mathcal{T}^+ \cup \mathcal{T}^-$ is zero at the beginning of the algorithm, at any moment, if $\Sigma_\mathcal{O}$ is positive set $\mathcal{T}^-$ must be non-empty and if $\Sigma_\mathcal{O}$ is negative set $\mathcal{T}^+$ is non-empty, which guarantees finding element $o_i$ for every $i$.


To see that for every step the bound from eq. \ref{eq:order_bound} is satisfied, we observe that at any moment the sum of $\mathcal{O}$ does not exceed $|\mathcal{O}| \hat{t} + t_\delta^+$ and does not fall below $|\mathcal{O}|\hat{t} - t_\delta^-$. This follows from the fact, that in the algorithm, while changing the sign of the $\Sigma_\mathcal{O}$ we add at most $t_\delta^+$ or subtract at most $t_\delta^-$ what immediately proves the bound.

Next, for agent $i$ we define order $\mathcal{O}_i = (o_i, \ldots, o_n, o_1, \ldots, o_{i-1})$. By eq. \ref{eq:order_bound}, the absolute value of each partial sum of $\mathcal{O}i$ is bounded by $t\delta^+ + t_\delta^-$, which becomes evident upon expanding the definition of the mean :
 \begin{equation*}
     \hat{t}_{i,l-i+1} = \frac{\hat{t}_{1,l}\cdot l - \hat{t}_{1,i-1}\cdot(i-1)}{l-i+1} \leq \frac{\hat{t}\cdot l + t_\delta^+ - \hat{t} \cdot(i-1) + t_\delta^-}{l-i+1} = \hat{t} + \frac{t_{\max} - t_{\min}}{(l-i+1)}
 \end{equation*}

 The same argument being true for the lower bound.

To formally represent sequence $\mathcal{O}$ as an assignment, we create a binary matrix $A$ by setting $1$ for agent $i$ in the column representing any route of time $o_i+\hat{t}$, so that the constraint \ref{eq_A2} is satisfied.
\end{proof}

\subsection{Greedy Assignment Rule minimising Inequity} 
\label{sec:greedy}

The analytical approaches, while satisfactory (they guarantee that cycles exist and are finite), often identify long cycles (with 5\% of cycles in Barcelona longer than 90 days, as we demonstrate in the results). 
To operationalise it and not rely on properties which are not granted (like high GCD of flows), we propose a heuristic \textbf{Assignment Rule} to achieve much faster satisfactory, although not perfect sequences of assignments.
While heuristic might not guarantee full equilibration of mean travel times, examples presented later show that the travel time deviations are quickly reduced. 

\begin{definition}
\label{def:AssignmentRule}
\textbf{Assignment rule} $\Gamma$ is a mapping that generates daily assignments. Let $j+1$ be the considered day, $(A_1, \ldots, A_j)$ past daily assignments ($0$ on the first day) and $\mathcal{D}_j$ accumulated time deviations ($\mathcal{D}_0 = 0$). Given the assignment rule $\Gamma$, the daily assignment $A_{j+1} = \Gamma((A_1, \ldots, A_j), \mathcal{D}_j)$. 
An assignment rule $\Gamma$ is a \emph{Markovian assignment rule} if it is irrespective of $(A_1, \ldots, A_j)$, i.e. for any two past assignments $(A_1, \ldots, A_j)$ and $(B_1, \ldots, B_j)$ which yield the same $\mathcal{D}_j$, $\Gamma((A_1, \ldots, A_j), \mathcal{D}_j) = \Gamma((B_1, \ldots, B_j), \mathcal{D}_j)$.
\end{definition}


By the above definition, the Wardopian Cycle is the result of any assignment rule that achieves $\mathcal{D}_j =0$ on some day $j$. In general, we can consider any rule that aims to reduce the time deviations as a heuristic solution to the multi-day traffic assignment problem. 

To minimise the cumulative discontent in a general setting, we propose the Greedy Assignment Rule and the corresponding algorithm, illustrated in Fig. \ref{fig:methodology}. The idea behind this is simple: drivers who experienced the highest cumulative discontent are assigned the fastest routes tomorrow. This approach is effective, quickly computable, and applicable to many generalised settings (such as varying demand and assignment).

Each day, we sort drivers in decreasing order of their travel time deviation sums (eq. \ref{eq:cum_time_dev}), pairing them to the corresponding route in order. The drivers most favoured so far are assigned to longer routes and vice versa.


We formalise our assignment rule as follows.
Without loss of generality, we assume $t_1 \leq \dots, \leq t_K$. 
Then, we find a permutation $\pi_j$ of the vector $\mathcal{D}_j$ such that $\pi_j (\mathcal{D}_j)$ is decreasing (i.e. each subsequent element of the vector is weakly smaller).
We denote $\pi_{j, l}$ the $l$-th element of this permutation.
We define mapping $T_j: \{1, \ldots, Q \} \ni l \xrightarrow{} k: \sum_{j \leq k} Q_{j-1} < l \leq \sum_{j \leq k} Q_{j}, k \in \{1, \ldots, K \}$, where $Q_j$ are SO flows on routes $1, \ldots, K$.
We construct a permutation matrix $P_j$ such that the ones are in positions $\{(\pi_{j, l}, T_j(l)): l \in \{ 1, \ldots, Q \} \}$.
Then, we define the assignment rule of the Greedy Algorithm as $A_{j+1} = P_j A_j$.

Worth noting is the fact that we do not require the agents set to be the same every day, nor the routes to be equal. The described procedure works on an arbitrary assignment and flows, as we discuss in section \ref{sec:bit}.
Also, the proposed assignment rule is Markovian, as we just consider the $\mathcal{D}_j$.

\begin{proposition} 
\label{prop:greedy}
    The Greedy assignment is the assignment that minimises the inequity for the next day.

    \begin{proof}
        See \ref{app:algo_properties}
    \end{proof}
\end{proposition}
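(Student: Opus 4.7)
The plan is to reduce this to a classical rearrangement inequality. First I would set up the variable: the only quantity that depends on the next-day assignment $A_{j+1}$ (with $\mathcal{D}_j$ already fixed by history) is $\bigD_{j+1} = A_{j+1} T - \hat{t}\mathbf{1}$, a $Q$-vector whose $i$-th coordinate equals $t_k-\hat{t}$ whenever driver $i$ is routed to $k$. Since any feasible $A_{j+1}$ is required by \eqref{eq_A2} to assign exactly $Q_k$ drivers to route $k$, the multiset of entries of $\bigD_{j+1}$ is always the same fixed multiset, namely the vector $v$ consisting of $Q_k$ copies of $t_k-\hat{t}$ for each $k=1,\dots,K$. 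In other words, as $A_{j+1}$ ranges over feasible assignments, $\bigD_{j+1}$ ranges over all permutations of $v$.

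Next I would expand the objective. By definition of inequity,
\begin{equation*}
Q\cdot I_{j+1} \;=\; \|\mathcal{D}_{j+1}\|_2^2 \;=\; \|\mathcal{D}_j\|_2^2 + \|\bigD_{j+1}\|_2^2 + 2\,\langle \mathcal{D}_j,\bigD_{j+1}\rangle.
\end{equation*}
The first summand is history and therefore constant; the second is $\|v\|_2^2$ which is independent of the chosen permutation. Hence minimising $I_{j+1}$ over feasible $A_{j+1}$ is equivalent to minimising the single inner product $\langle \mathcal{D}_j,\bigD_{j+1}\rangle$ over permutations of $v$.

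At this point I would invoke the rearrangement inequality: for two sequences $(a_i)$ and $(b_i)$, the sum $\sum_i a_i b_{\sigma(i)}$ is minimised precisely when one sequence is sorted in the reverse order of the other. Applying this with $a=\mathcal{D}_j$ and $b=v$, the minimum is achieved by pairing the coordinate of $\mathcal{D}_j$ with the largest cumulative deviation with the smallest entry of $v$ (i.e.\ the shortest route, since $t_1\le\dots\le t_K$), the second largest with the next-smallest, and so on. This is exactly the rule prescribed by $\pi_j$ (which sorts $\mathcal{D}_j$ in decreasing order) composed with $T_j$ (which enumerates routes in increasing-time order with multiplicity $Q_k$); the construction of $P_j$ and $A_{j+1}=P_j A_j$ in the statement is simply the permutation matrix implementing this pairing, which by definition satisfies both \eqref{eq_A1} and \eqref{eq_A2}.

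The only subtleties I would address are ties and feasibility. Ties in $\mathcal{D}_j$ or among the $t_k$ mean the minimiser need not be unique, but the rearrangement inequality still guarantees that every greedy choice attains the same minimal value, so the statement goes through as an equality between optima rather than a uniqueness claim. Feasibility is automatic: $P_j$ is a permutation matrix with ones exactly on $\{(\pi_{j,l},T_j(l))\}_{l=1}^Q$, so each driver is placed on one route and the column sums recover the prescribed flows $Q_k$. I do not anticipate a deeper obstacle — the whole proof is a one-line reduction to rearrangement once one notices that the feasible $\bigD_{j+1}$ live on a single permutation orbit.
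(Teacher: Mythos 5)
Your proposal is correct and follows essentially the same route as the paper's own proof: expand the squared $l_2$ norm of the cumulative deviations, observe that only the cross term $\langle \mathcal{D}_j,\bigD_{j+1}\rangle$ depends on the next-day assignment (the multiset of next-day deviations being fixed by the flow constraints), and conclude by the rearrangement inequality that pairing the largest cumulative deviations with the shortest routes is optimal. Your added remarks on ties and feasibility are sound refinements of the same argument rather than a different approach.
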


With such an approach, reaching the full equity is not guaranteed (as we will demonstrate in the results that the greedy algorithm tends to oscillate about the equilibrium by a constant); thus, the cycles are not purely Wardropian anymore in a sense of the definition \ref{def:wardropian_cycle}). However, this procedure allows us to rapidly reduce the Inequity of the system while omitting the assumptions of fixed flow and route times. 

Moreover, we provide a simple proof that the greedy algorithm produces bounded cumulative deviations, hence the average time of each agent converges to the average time of the OD pair, with a \emph{very} non-optimal constant. This proof can certainly be significantly improved.

\begin{proposition}
(Average times of agents converge under the Greedy Assignment rule)

For an SO assignment, let $\bigD_0$ denote time deviations, and $M:= ||\bigD_0||_{\infty}$ (the $l_{\infty}$ norm) be the maximal deviation from the average. Let $K^+ := |\{i: [\bigD_0]_i \ge 0\}|$ be the number of non-negative deviations and let $K^-:= |\{i: [\bigD_0]_i < 0\}|$ be the number of negative deviations, where $K^+ + K^- = Q$. Then for every agent i and every day up to $J$ we have 
\begin{equation}
\label{eq_Dbounded}
-M(K^- +1)\le \sum_{j=1}^J [\bigD_J]_i \le M(K^+ +1).
\end{equation}
\end{proposition}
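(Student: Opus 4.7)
The natural approach is induction on $J$, leveraging two structural facts. First, the multiset of daily deviations $\{[\bigD_j]_i : i\in [Q]\}$ is the same on every day (it is determined by the fixed SO flows and travel times), so I fix once and for all a sorted version $d_{(1)}\le\dots\le d_{(Q)}$ with $d_{(1)},\dots,d_{(K^-)}<0$, $d_{(K^-+1)},\dots,d_{(Q)}\ge 0$, and $|d_{(r)}|\le M$ for every $r$. Under this sorting, the greedy rule reduces to: the agent with the $r$-th largest cumulative deviation at the end of day $j$ receives $d_{(r)}$ on day $j+1$. Second, each $\bigD_j$ sums to $0$, so $\sum_i[\mathcal{D}_j]_i=0$ for every $j$. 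The base case $J=0$ is immediate since $\mathcal{D}_0=0$.

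For the inductive step from $J$ to $J+1$, sort cumulative deviations at day $J$ as $v^{(1)}\ge\dots\ge v^{(Q)}$; the agent sitting at rank $r$ ends day $J+1$ with cumulative $v^{(r)}+d_{(r)}$. For the upper bound, the case $r\le K^-$ is trivial because $d_{(r)}<0$ gives $v^{(r)}+d_{(r)}<v^{(r)}\le v^{(1)}\le M(K^++1)$ by the inductive hypothesis. For $r>K^-$, we have $d_{(r)}\in[0,M]$ and $v^{(r)}\le v^{(K^-+1)}$, so it suffices to bound $v^{(K^-+1)}+M\le M(K^++1)$. The crux is the chain
\begin{equation*}
(K^-+1)\,v^{(K^-+1)} \;\le\; \sum_{r=1}^{K^-+1} v^{(r)} \;=\; -\sum_{r=K^-+2}^{Q} v^{(r)} \;\le\; (K^+-1)\,M(K^-+1),
\end{equation*}
where the left inequality uses monotonicity of the sorted sequence, the middle equality uses $\sum_r v^{(r)}=0$, and the right inequality uses the inductive lower bound $v^{(r)}\ge -M(K^-+1)$. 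This yields $v^{(K^-+1)}\le(K^+-1)M$, and hence $v^{(K^-+1)}+M\le K^+M\le M(K^++1)$, closing the upper bound step.

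The lower bound follows by an entirely symmetric argument: when $r>K^-$ the claim is immediate from $d_{(r)}\ge 0$, while for $r\le K^-$ I need $v^{(K^-)}-M\ge -M(K^-+1)$, which I derive from $\sum_{r=K^-}^{Q} v^{(r)}=-\sum_{r=1}^{K^--1} v^{(r)}$ combined with the inductive upper bound on the right and monotonicity on the left, giving $v^{(K^-)}\ge -(K^--1)M$. The main obstacle is not conceptual but purely bookkeeping: handling edge cases such as $K^+=1$ (which only makes one of the tail sums empty and simplifies matters) or the degenerate $K^-=0$ case (forcing all deviations to be zero), and confirming that the argument is insensitive to how ties in the greedy sort are broken. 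As the authors themselves remark, the constants obtained in this way are quite loose; the slack comes from relaxing every tail term to its worst-case value rather than tracking how the greedy rule progressively equalises cumulative deviations.
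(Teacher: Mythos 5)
Your proof is correct and is essentially the paper's argument run forwards rather than by contradiction: both rest on the zero-sum of cumulative deviations combined with the fact that, under the greedy rule, an agent receiving a non-negative daily deviation sits at rank at least $K^-+1$ in the decreasing order of cumulative deviations, with the paper phrasing this as a contradiction on the first violating day while you phrase it as a day-by-day induction on the sorted values $v^{(r)}$. Your bookkeeping (ties, $K^-=0$, $K^+=1$) checks out, and your chain in fact yields the slightly sharper invariant $-MK^-\le[\mathcal{D}_J]_i\le MK^+$, consistent with the authors' own remark that their constant is far from optimal.
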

\begin{proof}
Suppose the contrary. Let $J_0$ be the first day on which \eqref{eq_Dbounded} is violated. Suppose first that for some agent $i_0$ we have $\sum_{j=1}^J [\bigD_j]_{i_0} > M(K^+ +1)$. Then on day $J_0-1$ we have $\sum_{j=1}^{J_0-1} [\bigD_j]_i > MK^+$ for at least $K^- +1$ agents since agent $i_0$ was at least $K^- + 1$-st in the order of decreasing cumulative deviations to receive a positive deviation on day J.  
On the other hand, for every among the remaining $K^+ - 1$ agents we have  $\sum_{j=1}^{J_0 - 1} [\bigD_j]_i > -M(K^- + 1)$.
Consequently, 
\begin{equation*}
0 = \sum_i \sum_{j=1}^{J_0 - 1} [\bigD_j]_i > (K^- + 1)MK^+ + (K^+ - 1) (-M) (K^- + 1) > M(K^- + 1),
\end{equation*}
which is a contradiction. The case where on day $J_0$ we have $\sum_{j=1}^J [\bigD_j]_{i_0} < -M(K^- + 1)$ is similar. 
\end{proof}

This bound on each agent deviation works regardless of the assignment on the underlying OD pair. In Results (Figure \ref{fig:conv_greedy}), we show that for OD pairs from real data, the convergence is orders of magnitude faster.



\subsection{Reducing the Discontent Progression} 

\label{sec:compliance}

As we argued, the cycle $\mathcal{A}$ shall nullify the $\bigD_j$ (eq. \ref{eq:time_deviations}) and $\mathcal{D}_j$ (eq. \ref{eq:cum_time_dev}). 
Proposition \ref{Prop3} improved the cycle by bounding the deviations from the mean for every driver. In this section, we dive deeper into this topic and 
advocate for minimising also the $\mathcal{H}_J$ measure (eq. \ref{eq:history}) while constructing Wardropian Cycles.

Minimising cumulative deviations or/and cumulative discontent, as per Definition \ref{def:cum_discontent}, amounts to balancing the travel times above average in terms of cumulative deviations or/and cumulative discontent by travel times below average. However, even for the optimal permutation, there is always some residual cumulative deviation/discontent, different across different users. Indeed, example \ref{ex:discontent} shows that after a single iteration of a cycle, certain drivers may come out more discontent than others.







Unfortunately, minimising the discontent progression $\mathcal{H}$ turns out to be NP-hard, yet we propose heuristics to reduce it.
We may track $\mathcal{H}$ in the cycle and reduce it across the cycles. Namely, by generalising the idea of greedy assignment (Sec. \ref{sec:greedy}), i.e. the drivers with the worst history shall be assigned in the next cycles such that their $\mathcal{H}$ is minimised.

This can be achieved via the Permuted cyclical assignment (def. \ref{def:permuted}) and the inter-cycle assignment of drivers to \emph{positions in cycle} with a problem which minimises it (def. \ref{def:intercycle}). A permuted cyclical assignment repeats the cyclical assignment; however, it changes the positions of the drivers with every iteration of the cycle.

\begin{definition}[Permuted cyclical assignment]
\label{def:permuted}
    A \emph{permuted cyclical assignment} is, for a given $\zeta$ and sequence $(A_1, \dots, A_{\zeta})$, an assignment 
    \begin{eqnarray*}
    &&(\Sigma_1 A_1, \Sigma_1 A_2, \dots, \Sigma_1 A_\zeta, \\ &&\Sigma_2 A_1, \Sigma_2 A_2, \dots, \Sigma_2 A_\zeta, \\&&\Sigma_3 A_1, \Sigma_3 A_2, \dots, \Sigma_3 A_\zeta, \dots)
    \end{eqnarray*}
    where $\Sigma_1, \Sigma_2, \dots$ is some sequence of matrices permuting the drivers.

\end{definition}

\begin{definition} [Intercycle optimisation problem]
\label{def:intercycle}
Given a Wardropian Cycle and a given previous assignment to \emph{positions in the cycle} in the first iteration of the cycle, determine the assignment of drivers to positions in the cycle in the second and further iterations of the cycle.
Formally:
\begin{equation}
    \min_{\sigma \in S_n} \left\{ \sum_{i \leq Q}  \left[ \max_{j\leq J} -[\mathcal{H}_J]_{i,j} \right] \right\}
\end{equation}

\end{definition}
Without going into details, we remark that an efficient heuristic to this problem could be a generalisation of the Greedy Assignment Rule described in Section \ref{sec:greedy}. For instance, for the driver $1$ in permutation $\sigma_1$ in Example \ref{ex:discontent}, who came out with (both final and maximum) discontent $4$, it may be preferred to be assigned to position $2$ in the cycle in the second iteration of the cycle to compensate for the unpleasant experience in the first cycle.

Furthermore, one could also ask what the optimal Wardropian Cycle is and whether it has to be derived from a single permutation. Indeed, it turns out that admitting general cyclical assignments of the same length one could decrease the worst cumulative deviations for the price of complexity of the assignment pattern, see Example \ref{ex:multi_better}.


\subsection{Discretisation}
\label{sec:discrete}

While the solution to the Assignment (eq. \ref{eq:Assignment}) assigns real-valued flow to real-valued path flows $Q_k$, our method is inherently discrete and operates on individual drivers. Therefore, to operationalise it, we need an assignment solution in discrete space:

     \begin{equation}
     [Q_k \in \mathbb{N}, t_k]_{k \leq K} = \text{Assignment}(Q \in \mathbb{N}) 
     \label{eq:AssignmentDiscrete}
 \end{equation}

 We note that the standard Frank-Wolfe solutions \cite{TrafficAssignment_FrankWolfe2021}, which we use, solve the assignment in continuous space. Yet, rounding flows $Q_k$ to the nearest integer yielded a sufficiently accurate approximation.

\subsection{Cyclical User Equilibrium}
\label{sec:new_wardrop}

    
    


\begin{definition}
    \label{def_cyclical}
    Drivers on a given OD pair are in \textit{Cyclical User Equilibrium}, if for some period $\zeta$ they follow a sequence of assignments $\mathcal{A}^{WC} =(A_1, \ldots, A_{\zeta})$ such that every driver gains from switching from $\mathcal{A}^{UE}$ to $\mathcal{A}^{WC}$.

    \begin{subequations} \label{eq:CUE}
\begin{align}
    \sum_{j=1}^{\zeta} \mathbf{D}_j &= \mathbf{0}, \label{eq:CUE1} \\
    \forall_{i \leq Q} \left[ \sum_{j=1}^{\zeta} A_j \mathbf{T} \right]_i &< t^{UE} \cdot \zeta \label{eq:CUE2}
\end{align}
\end{subequations}

    where $\bigT=[t_1, \ldots, t_K]^T$ and $t^{UE}$ is the  User Equilibrium travel time on this OD pair.
\end{definition}

This happens when for given OD pair SO assignment yields shorter travel time better than UE and cyclic rule $\mathcal{A}$ forms a Wardropian Cycle. 

Below we discuss the basic theory of Cyclical assignments and equilibria as defined above in terms of standard economical notion of Pareto optimality.
\begin{definition}
Let $\mathcal{A} = (A_1,\dots, A_\zeta)$ and $\mathcal{\tilde{A}} = (\tilde{A}_1,\dots, \tilde{A}_{\tilde{\zeta}})$ be two cyclical assignments. We say that $\mathcal{A} \le \mathcal{\tilde{A}}$ if for every driver $i$ we have
\begin{equation*}
    \left[\sum_{j=1}^\zeta A_j \mathbf{T} /\zeta \right]_i \le \left[\sum_{j=1}^{\tilde{\zeta}} \tilde{A}_j \mathbf{T} /\tilde{\zeta} \right]_i
\end{equation*}
We say that $\mathcal{A} < \mathcal{\tilde{A}}$ if $\mathcal{A} \le \mathcal{\tilde{A}}$ and there exists $i$ such that $\left[\sum_{j=1}^\zeta A_j \mathbf{T}\right]_i < \left[\sum_{j=1}^{\tilde{\zeta}} \tilde{A}_j \mathbf{T}\right]_i$.
\end{definition}

\begin{definition}
    A cyclical assignment $\mathcal{A}$ is \emph{Pareto-optimal} if it is a minimal element of the relation $\le$, i.e. there exists no cyclical assignment $\mathcal{\tilde{A}}$ such that $\mathcal{\tilde{A}} < \mathcal{{A}}$.  
\end{definition}

\begin{example}
\begin{enumerate}
\item[i)] A User Equilibrium with every driver choosing the same route every day is a cyclical assignment with period $\zeta=1$.
\item[ii)] A User Equilibrium with drivers switching routes day-to-day is a permuted cyclical assignment (def. \ref{def:permuted}).
\item[iii)] Let for a given fixed OD pair $\mathcal{{A}}^{\text{UE}}$ correspond to a (possibly non-unique) User Equilibrium (with period $1$),  $\mathcal{{A}}^{\text{SO}}$ correspond to system optimum (with period $1$) and $\mathcal{{A}}^{\text{WC}}$ correspond to a Wardropian Cycle with some period ${\zeta}$ derived from the given System Optimum. Then 
\begin{itemize}
    \item $\mathcal{A}^{\text{UE}}$ and $\mathcal{A}^{\text{SO}}$ are usually incomparable, i.e. neither $\mathcal{A}^{\text{SO}} \le \mathcal{A}^{\text{UE}}$ nor $\mathcal{A}^{\text{UE}} \le \mathcal{A}^{\text{SO}}$, i.e. some drivers travel faster in \emph{UE} and others in \emph{SO}.

    \item $\mathcal{A}^{\text{WC}} \le \mathcal{A}^{\text{UE}}$,
    \item $\mathcal{A}^{\text{WC}}$ and $\mathcal{A}^{\text{SO}}$ are incomparable,
\end{itemize}
 
\item [iv)] $\mathcal{A}^{\text{UE}}$ is usually not Pareto-optimal (unless it coincides with system optimum), while both $\mathcal{A}^{\text{SO}}$ and $\mathcal{A}^{\text{WC}}$ are Pareto-optimal.  
\end{enumerate}
\end{example}

\begin{remark}
Note that if we allow cyclical assignments only of length one, $\mathcal{A^{UE}}$ is Pareto optimal as long as it is the best User Equilibrium. In particular, this is the case when the User Equilibrium (Nash Equilibrium) is unique, see e.g. \cite{dafermos1971extended, smith1979existence}. In view of this, the essence of Definition \ref{def_cyclical} is that $\mathcal{A}^{\text{UE}}$ is usually no longer Pareto optimal when cycles \emph {longer} than $1$ are allowed. 
\end{remark}

\begin{proposition}
A Cyclical User Equilibrium is Pareto-optimal within the class of Cyclical User Equilibria if and only if it is a Wardropian Cycle derived from the System Optimum for a given pair OD. 
\end{proposition}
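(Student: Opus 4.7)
The plan is to reduce the equivalence to one structural observation together with a short explicit construction.

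First, the key observation: in any Cyclical User Equilibrium $\mathcal{A} = (A_1,\ldots,A_\zeta)$, condition (\ref{eq:CUE1}) imposes $\mathcal{D}_\zeta = 0$, so every driver on the OD pair experiences the same cycle-average travel time
\begin{equation*}
\hat{t}_\mathcal{A} \;=\; \frac{1}{\zeta}\sum_{j=1}^{\zeta}\hat{t}^{(j)}, \qquad \hat{t}^{(j)} \;=\; \frac{1}{Q}\sum_{k}Q_k^{(j)}t_k^{(j)}.
\end{equation*}
Because SO minimises $\sum_k Q_k t_k$ over feasible OD path flows, $\hat{t}^{(j)}\ge \hat{t}^{SO}$ on every day, hence $\hat{t}_\mathcal{A}\ge \hat{t}^{SO}$, with equality precisely when each $A_j$ realises SO flows---that is, precisely when $\mathcal{A}$ is a Wardropian Cycle derived from SO.

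For the sufficiency direction, suppose $\mathcal{A}$ is a WC from SO, so every driver's cycle-average equals $\hat{t}^{SO}$. A strict dominator $\tilde{\mathcal{A}} < \mathcal{A}$ within the CUE class would have to give some driver a cycle-average strictly below $\hat{t}^{SO}$; but $\tilde{\mathcal{A}}$ is itself a Wardropian Cycle, so all of its drivers share the common value $\hat{t}_{\tilde{\mathcal{A}}} \ge \hat{t}^{SO}$ by the key observation---a contradiction, so $\mathcal{A}$ is Pareto-optimal.

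For the necessity direction I argue contrapositively. If $\mathcal{A}$ is a CUE not derived from SO, the key observation yields $\hat{t}_\mathcal{A} > \hat{t}^{SO}$. Using Proposition \ref{Prop1}, construct an explicit WC $\mathcal{A}^* = (A^*, PA^*, \ldots, P^{Q-1}A^*)$ from any SO assignment $A^*$; every driver in $\mathcal{A}^*$ has cycle-average $\hat{t}^{SO}$. Condition (\ref{eq:CUE2}) applied to $\mathcal{A}$ gives $\hat{t}_\mathcal{A} < t^{UE}$, so $\hat{t}^{SO} < \hat{t}_\mathcal{A} < t^{UE}$; hence $\mathcal{A}^*$ satisfies (\ref{eq:CUE2}) as well and is itself a CUE. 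Since all of its drivers strictly beat their cycle-averages in $\mathcal{A}$, we have $\mathcal{A}^* < \mathcal{A}$, contradicting Pareto-optimality of $\mathcal{A}$.

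The main delicacy to watch is the interpretation of ``derived from SO'' when the SO flow vector is non-unique: it must be read as ``daily flows attaining the SO minimum of the OD total travel time,'' which is exactly the equality case of the key observation, and under this reading both directions proceed without modification. A minor care point is that the Pareto comparison is made in \emph{averages} rather than sums, so the period mismatch between $\mathcal{A}^*$ (of length $Q$) and $\mathcal{A}$ (of length $\zeta$) poses no issue.
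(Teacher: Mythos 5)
You should know at the outset that the paper states this proposition without any accompanying proof (nothing in the appendix addresses it), so there is no authorial argument to compare yours against. Judged on its own, your skeleton is the natural one and most of it is sound: condition (\ref{eq:CUE1}) forces every driver in a CUE to share the cycle-average travel time, so Pareto comparison within the CUE class collapses to comparing one scalar per cycle; sufficiency and necessity then both reduce to knowing that the SO-derived Wardropian Cycle (which exists by Proposition \ref{Prop1}) attains the smallest achievable common average, and your necessity step correctly checks via (\ref{eq:CUE2}) that this cycle is itself a CUE before using it as a dominator. Your care about non-unique SO flow vectors and about comparing averages rather than sums across different cycle lengths is also right.

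The genuine gap is the key inequality $\hat{t}^{(j)} \ge \hat{t}^{SO}$. You justify it by asserting that SO minimises $\sum_k Q_k t_k$ \emph{over feasible OD path flows}, but the paper defines SO as the system-wide optimum (explicitly ``at the system, not on the OD pair level''), and path travel times depend on the flows of all OD pairs. The SO flows restricted to one OD pair therefore need not minimise that pair's average travel time; indeed the paper's own Barcelona results report that $17\%$ of OD pairs have a worse average under SO than under UE (Sec. \ref{sec:CSO}), and for such pairs the SO-derived cycle is not even a CUE, so both directions of your argument fail there. Your proof is valid only under an extra hypothesis the paper leaves implicit: either the OD pair is treated in isolation, with path times functions of that pair's flows alone, so that SO coincides with the OD-level minimiser of $\sum_k Q_k t_k$; or the admissible cyclical assignments are restricted to those whose daily OD-average lies between the SO and UE averages (the ``price-of-anarchy gap'' of the remark following the proposition). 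You gesture at this in your closing paragraph about non-uniqueness, but the needed assumption is about OD-level optimality of SO, not uniqueness; state it explicitly, and with it your argument goes through.
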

\begin{remark}
    There exist Cyclical User Equilibria which are not Pareto-optimal, derived from assignments which are somewhere in the price of anarchy gap, i.e. between SO and UE in terms of average travel time. 
\end{remark}

The proposed Cyclical User Equilibrium concept sheds a new light on the classic Wardopian principles. Allowing for a longer cycle $\zeta>1$ and comparing average travel times rather than single realisations leads to a new perspective. The comparison of two assignments $\mathcal{A}$ allow identifying the \emph{Pareto-optimal} one. If we depart from SO assignment and our cycle is Wardopian users travel both optimal and with the same travel time. This allows us to paraphrase Wardrop and say that in CUE: \emph{No user has incentive to switch position in cyclical assignment with other user on the same OD pair (eq.\ref{eq:CUE1}) and no user has incentive to opt out of the cycle and revert to User Equilibrium (eq.\ref{eq:CUE2}).}

\subsection{OD-fair System Optimal Assignment}
\label{sec:CSO}

To make the \emph{Wardopian Cycles} system-wide applicable, we need to make sure that the system will be in Cyclical User Equilibrium (def. \ref{def_cyclical}); in particular, average travel times for each OD pair shall be shorter than in User Equilibrium (eq. \ref{eq:CUE2}).
However, setting a city-wide system flow pattern to reach the global optimum (minimising the total travel time) often leads to a worse performance on some OD pairs than in User Equilibrium. In our  Barcelona case, $17\%$ of OD pairs are worse off in SO assignment despite overall system profit. Classical formulations of SO as an optimisation problem while minimising global objective do not control whether on some OD-pairs travel time increases. To this end, we propose an \textbf{OD-fair System Optimal assignment} by adding constraint to the classical Equilibrium formulation, introduced by \cite{BeckmannTAP}, that checks if: \textit{All OD pairs in the assignment have the average travel at most equal to one in the User Equilibrium assignment}:
\begin{equation}
   \sum_{k \in K} Q_k t_k / Q \leq t_{UE}, \forall(o,d)
\end{equation}
Unfortunately, this cannot be easily implemented, due to the non-convexity of this constraint in a multivariable domain. Here, we simply formulate the problem, leaving the practical implementation to future studies.





Alternatively, the pairs benefiting from the scheme may transfer produced benefits to the drivers travelling disadvantaged pairs. This could be done via monetary or non-monetary transfers, with little chances of success (as some systematically disadvantaged users are unlikely to be happy with monetary compensations \cite{bohren2022systemic}).

\subsection{Inequity Minimising Daily Assignment Rule}
\label{sec:bit}

While the cycles are neat, formal, and reduce inequity in finite time, they become inadequate as soon as the setting becomes more realistic. In particular, like typically in transport systems, when travel times $t_k$ vary from day to day, or flows $Q$ and the assignment solution $Q_k$ change in time. Any of such natural fluctuations make the proposed Wardropian Cycles theory inapplicable.

To this end, we propose the concept of \emph{Inequity minimising Daily Assignment Rule} applicable in more general contexts. Now, we generalise the \emph{Assignment Rule}  (def. \ref{def:AssignmentRule}) for the so-called \emph{bit} of the traffic flow.
The \emph{bit} $b$ is a flow of $Q^b$ vehicles between the two arbitrary points in the network $O_b$ , $D_b$ for which an externally computed \emph{Assignment} is given, distributing the bit flow to paths $Q_k$. If we know the history of (normalised) inequity of each driver in this bit $\bar{I}_{i \in Q^b, j}$ until today $j$, the greedy assignment rule (prop. \ref{prop:greedy}) can be applied to reduce this inequity.


With such formulation, any Markovian rule $\Gamma$, in particular the Greedy assignment (Sec. \ref{sec:greedy}), may be applied to any selected part of the city, with any flow, also when only part of the flow is controlled. The drivers' composition $Q_b$ does not need to be identically repeated every day and travel costs may vary. The flow may be composed of regular drivers with a long history of participation along with occasional drivers travelling rarely. If only one can predict the total flow $Q_b$ and compute the desired assignment to the paths $Q_k$, we may assign the drivers participating in that round of the scheme so that their equity will be minimised in long-term. Based on Proposition \ref{prop:greedy}, we may use greedy assignment and allocate drivers with the greatest inequity so far to the shortest paths. This requires a normalised form of inequality $\bar{I}$ (eq. \ref{eq:intequitynormalized}), which tracks the cumulated history of travelling of a driver along a path of various lengths with various travel times:

\begin{equation}
    A_{j+1} = \Gamma((A_1, \ldots, A_j), \mathcal{D}_j). 
\end{equation}



\subsection{Method summary and limitations}
To conclude the method, we note that:
\begin{enumerate}
    \item The notion of SO in equation \ref{SO} can be replaced with any assignment, i.e. distribution of the total OD flow to paths. Here, we focus on System Optimal assignment as most illustrative and appropriate, yet in general any assignment can be equalised with the proposed framework. While many assignments will not be attractive, as they will yield travel times longer than in user equilibrium, presumably some others (e.g. minimising total mileage or CO2 emissions) can be both desired for policymakers and attractive for users.
    \item We require the strict mapping between flows and travel times $t_k=f(q_k)$. Naturally, we applied the classic convex BPR functions, for which we could compute city-wide assignment in Barcelona and other cities. However, the core method works with any model where travel times depend on total flow (which implies either macroscopic models or low sensitivity to route swapping in microscopic models). 
    \item Although we operate on the continuous fluid-like macroscopic flows, we propose the method for individual agents (drivers) as discussed in Section \ref{sec:discrete}.
    \item We reduce the travel costs to travel times, which (without the loss of generality) can be replaced with more general cost formulations. The problem may arise when individual path costs are differently perceived by the drivers and the order of path attractiveness varies between individuals.
    \item We define the Cycles for an OD pair flows, but we generalise it for a \emph{bit} of flow (Section \ref{sec:bit}) - which is more general and does not require full compliance nor coverage.
\end{enumerate}






\section{Results}
We start with a toy network example to showcase the Wardropian Cycle, as in Figure \ref{fig:methodology}, where we assign $18$ vehicles to three paths and analyse the cycle length and inequity when drivers follow distinct assignments.
Next, we apply our concept to real-world networks. 
We introduce our approach to the city of Barcelona, starting with the OD-level analysis, follow with system-wide analysis. 
Finally, to see broader potential impact, we report results from four cities from the Transportation Networks repository. 




\subsection{Illustration}
To illustrate how Wardropian Cycles work, we introduce the situation, as in Figure \ref{fig:methodology}. 
We analyse an abstract OD pair with $Q=18$ vehicles optimally assigned to three routes of various travel (see Table \ref{tab:toy_example}). 
We first show a full Wardropian Cycle, and then we reduce its length and inequity by adopting methods from Sections \ref{sec:optim_flow}, \ref{sec:optim_times}, and \ref{sec:greedy}.

\begin{table}[ht!]
\centering
\begin{tabular}{lcc}
\toprule
\textbf{Route} & \textbf{Flow} & \textbf{Time} \\
\midrule
Route 1. & 4 agents & 15 min \\
Route 2. & 6 agents & 14 min \\
Route 3. & 8 agents & 9 min \\
\bottomrule
\end{tabular}
\caption{Example system-optimal assignment for OD pair with $18$ drivers and average travel time $\hat{t} = 12 \text{min}$.}
\label{tab:toy_example}
\end{table}

\paragraph{Cycle length}
Full Wardropian Cycle (Sec. \ref{sec:upper_boundary}) comprise the sequence of 18 shift matrices, i.e. a cycle lasting $18  = 4+6+8 = Q$ days. 
We can improve this result by finding the $M = \text{GCD}(4,6,8) = 2$ and permuting agents in the cycle in batches of $2$ daily, achieving the cycle length $9$, a twofold improvement compared to the initial length. 

Next, we can observe that if an agent travels one day on Route $1$ and then on Route $3$, his average travel time is equal to $(15+9) /2 = 24/2 = 12\text{ min}$, that is, equal to the average time of this OD pair. This is the condition for the partition method. This means that we can separate the flow from Route 1, coupling it with equal flow from Route 3, achieving a separate cycle of length 2 for 8 agents in this OD pair. The remaining 10 agents can be split again into separate Wardropian cycles that with GCD reduction last for 5 days. Eventually, instead of a single 9-day cycle, we have two: one of length 2 for 8 agents, and the other of length 5 for 10 agents. To further improve it, one can apply inter-cycle permutations from Sec. \ref{sec:compliance} and swap drivers between groups every-cycle to reduce $\mathcal{H}$.

\paragraph{Inequity of the cycle}
Apart from the cycle length, we also analyse the Inequity experienced by drivers. 
We compare full cycles defined by the shift matrix (Sec. \ref{sec:upper_boundary}), randomly ordered permutation of these cycles, and systematic reduction from Proposition \ref{Prop3}. 
We present results in Figure \ref{fig:prep3_res}. 
The achieved bound of time deviations allows us to reduce the impact of daily assignments on the cumulated time deviations (eq. \ref{eq:cum_time_dev}) for individual drivers.

\begin{figure}[ht!]
    \centering
    \includegraphics[width=0.7\linewidth]{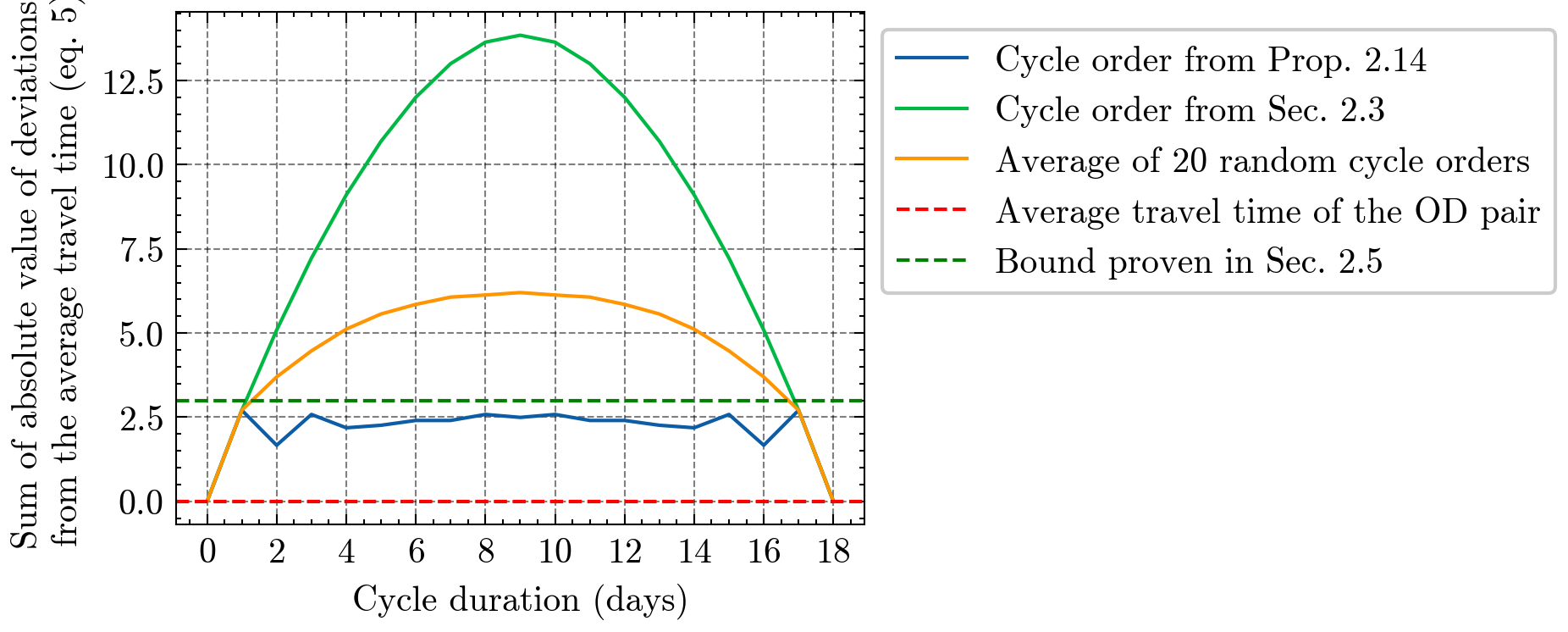}
    \caption{Sum of absolute deviations from the average travel time for the OD pair from Fig. \ref{fig:methodology} during the 18-day cycle after which the travel times equalize. However, the total deviations strongly depend on the ordering. The maximal value ranges from 12.5 when naive ordering is used, as in \ref{sec:upper_boundary} (green), the random (yellow) reduces it around 6. Notably, the bound proven in Sec. \ref{sub:opt_cycle} (dashed line) is conserved with the method proposed (blue line).
}
    \label{fig:prep3_res}
\end{figure}

\subsection{City-scale application}
We tested the concept of Wardopian Cycles for 2078 OD pairs in a Barcelona use-case, with the network topology and demand obtained from the standard repository \cite{TransportationNetworks}. 
To compute static UE and SO assignments we used the Python framework by \cite{TrafficAssignment_FrankWolfe2021}.
Our generic and reusable Python functions implementing the idea of the cycles are available at a public repository \cite{Wardropian_cycles}). We use them to compute analytical and algorithmic cycles for each of OD pairs.

As postulated, the finite Wardropian Cycles exists in Barcelona.
We analyse their length, convergence, and resulting inequity.
Expectedly, the first analytical approach, while guaranteeing perfect convergence, produces long cycles (see Table. \ref{tab:stats}).
The mean cycle is over 31 days long and the median length is 16 days. 95\% of the cycles are shorter than 100 days.
The GCD-based method reduced cycle lengths. The median decreased to 11 days, nonetheless the longest cycle remained unacceptably long (777 days before reaching equity).
Conversely, the Greedy Assignment Rule heuristic quickly reaches satisfying levels of the travel time imbalance; yet, for some OD pairs, fails to achieve a perfect equity.


\begin{figure}[ht!]
    \centering
    \subfloat[\label{fig:UE_to_SO}]{
        \includegraphics[width=0.45\linewidth]{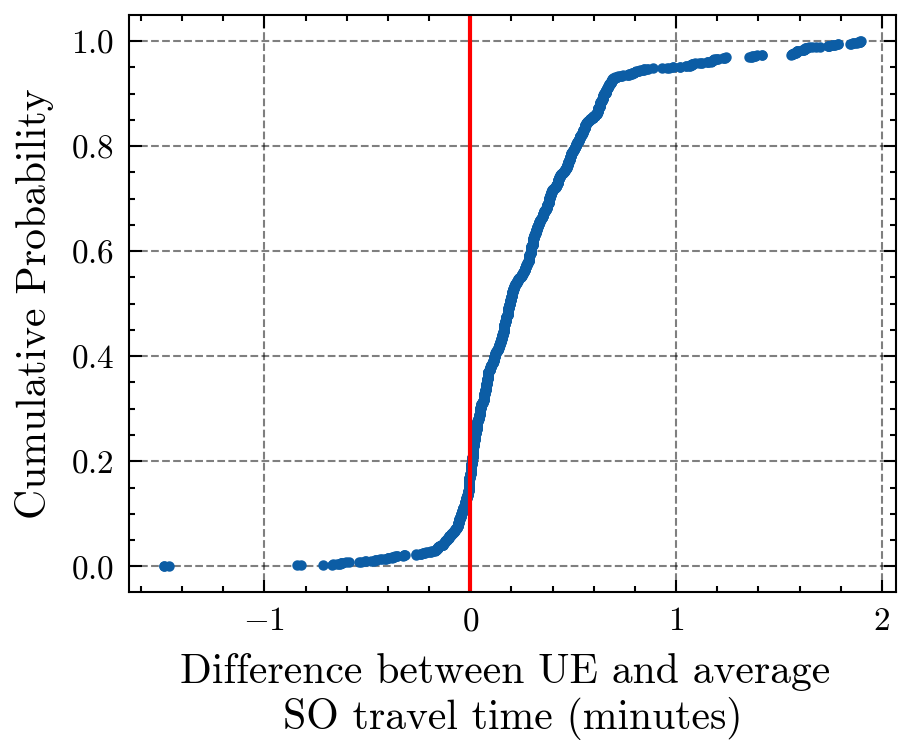}
    }
    \hfill
    \subfloat[\label{fig:longest_shortest_diff}]{
        \includegraphics[width=0.452\linewidth]{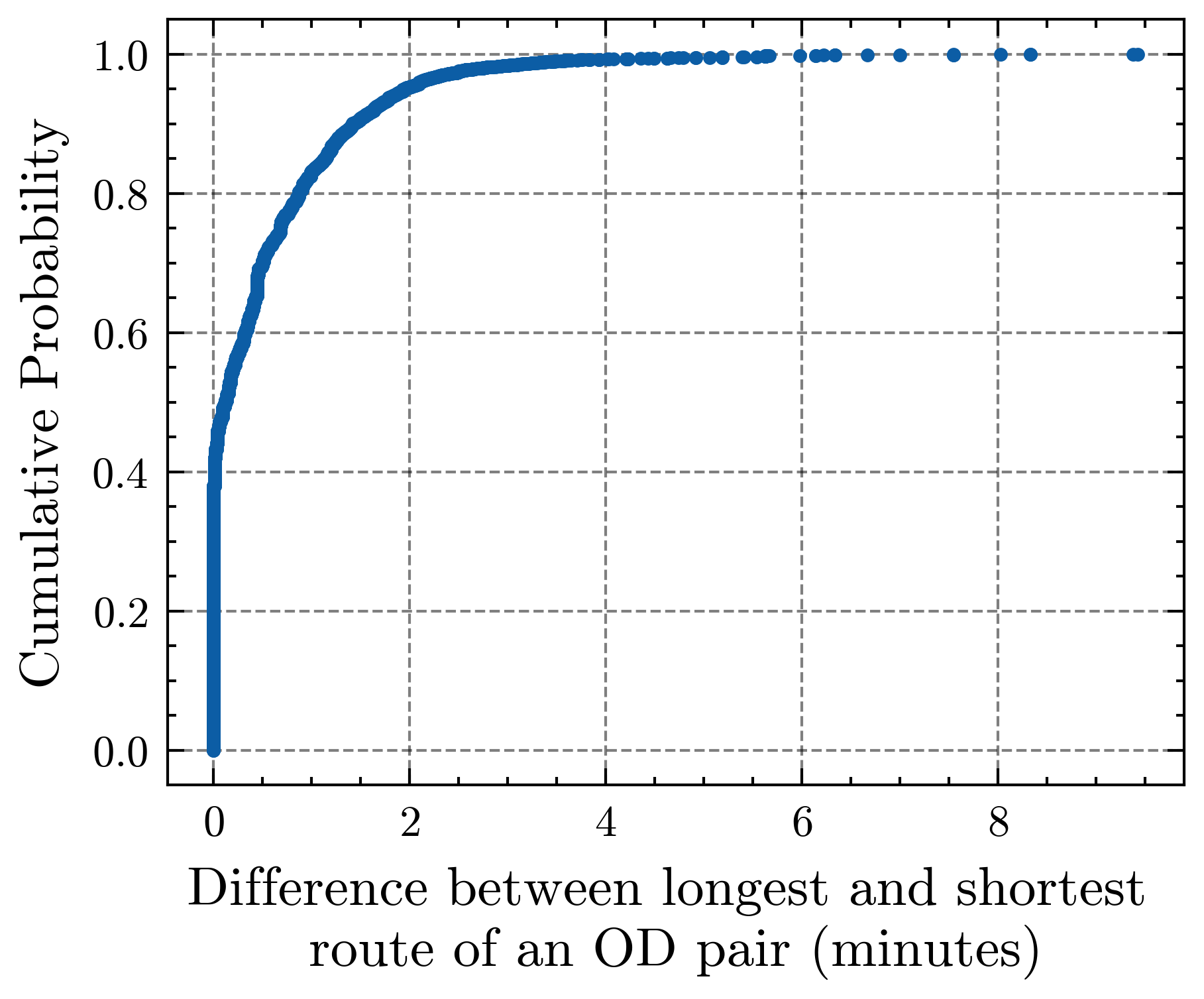}
    }
    \caption{System-wide results for Barcelona: difference between UE and SO travel times per OD (left) and difference between shortest and longest path per OD (right). For most OD pairs, system optimal assignment travel time is shorter than EU, but in Barcelona, nonetheless 17.5\% of OD pairs are worse off if SO assignment is applied (a). Difference between shortest and longest path in SO is below 1 minute for 80\% of OD pairs, and it exceeds 6 minutes only for handful of OD pairs (b).}
\end{figure}

We show the system-wide impact of moving from UE to SO. Applying the assignment that minimises the total travel time improves the system globally by 3\%, which translates to 670 hours of travel time ever day. Fig. \ref{fig:UE_to_SO} shows how the benefits are distributed across the OD pairs. One can notice that not all OD pairs benefit from SO.
Yet, for 82.5\% of the OD pairs, users benefit from complying to our cyclical assignment and on average arrive faster. Others are not in Cyclical User Equilibrium, as they do not meet the criteria of travelling faster than in UE (eq. \ref{eq:CUE2}). To eliminate this issue, we postulate the OD-fair System Optimal Assignment (Sec. \ref{sec:CSO}). 
In Fig. \ref{fig:longest_shortest_diff}, we report differences between the shortest and the fastest path in OD. For 80\% of OD pairs it is shorter than one minute, and it reaches maximum slightly below 10 minutes.
In Fig. \ref{fig:GCD_opt} we report the distribution of cycle lengths computed with analytical methods for 2078 OD pairs in Barcelona. As we see, the cycles exist; yet, they are often too long to be acceptable for drivers.


\begin{table}[ht!]
\centering
\sisetup{table-format=3.2} 
\begin{tabular}{l
                S[table-format=3.0]
                S[table-format=3.2]
                S[table-format=3.0]
                S[table-format=3.2]
                S[table-format=3.0]
                S[table-format=3.1]}
\toprule
\textbf{Metric [days]} & \textbf{Max} & \textbf{Mean} & \textbf{Median} & \textbf{Std Dev} & \textbf{75th Pctl} & \textbf{95th Pctl} \\
\midrule
Full cycles (Sec.~\ref{sec:upper_boundary}) & 777 & 31.37 & 16 & 52.06 & 34 & 100.3 \\
GCD cycles (Sec.~\ref{sec:GCD})             & 777 & 25.94 & 11 & 47.81 & 28 & 91 \\
\bottomrule
\end{tabular}

\caption{Cycle length statistics for Full cycles and GCD Cycles Methods. GCD optimisation improves results, but several cycles are still too long to accept.}
\label{tab:stats}
\end{table}


\begin{figure}[ht!]
    \centering
    \subfloat[\label{fig:GCD_opt}]
    {
        \includegraphics[width=0.46\linewidth]{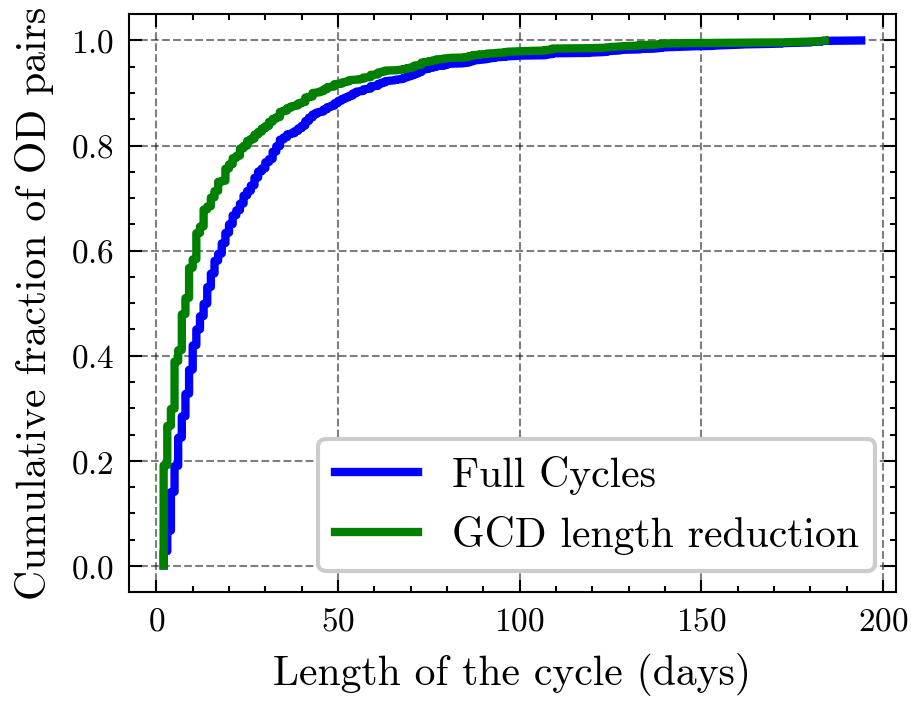}
    }
    \hfill
    \subfloat[\label{fig:greedy_diff}]
    {
        \includegraphics[width=0.46\linewidth]{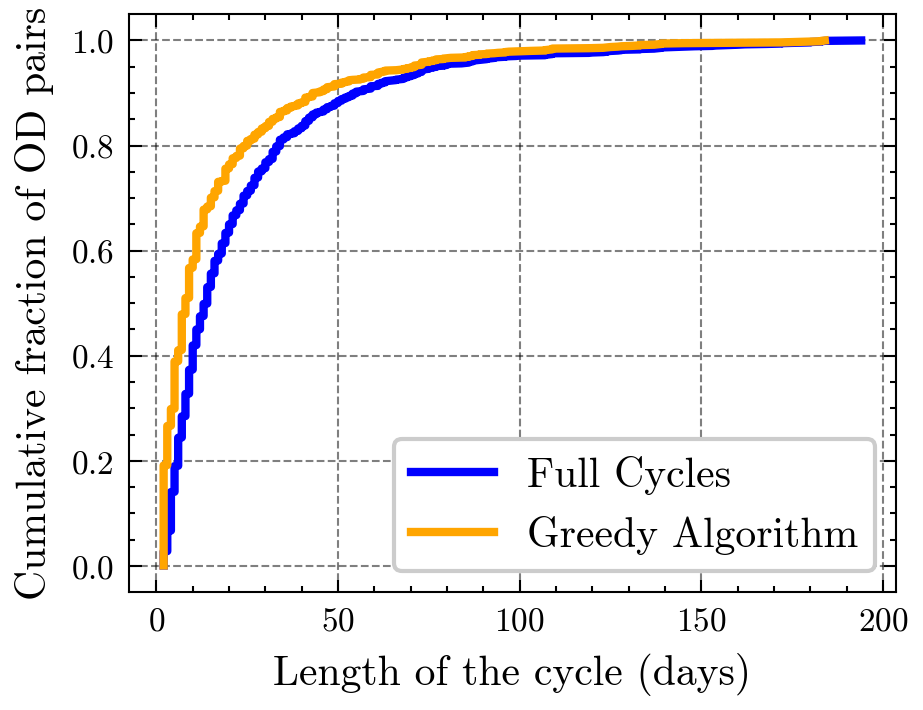}
    }
    \caption{Comparing distributions of cycle lengths in Barcelona with different methods. The full cycles, applying the shift matrix from Sec. \ref{sec:upper_boundary} (blue) can be shortened both with GCD reduction (Sec. \ref{sec:GCD}), marked green on a) and Greedy assignment (Sec. \ref{sec:greedy}), marked yellow on b). 
    Both plots depict only those pairs, that converged in a Greedy Algorithm and, for visibility, we omitted 35 out of 2078 OD pairs that have analytical convergence longer than 200 days.}
\end{figure}

For most OD pairs, the Greedy Assignment Rule improves the analytical solutions as presented in Fig. \ref{fig:greedy_diff}.
Although perfect equity may not be fully achieved, our algorithm significantly reduces inequalities within just a few days. 
In five days, the total inequity (eq. \ref{eq_inequity}) is more than halved and stabilises soon afterwards (Figure \ref{fig:conv_greedy}). As a performance metric, we use the distribution of the maximal and the minimal experienced average time for each OD, since it represents how uneven the agents on this OD pair are. As can be seen in Table \ref{tab:inequity_stats} and Figure \ref{fig:conv_greedy}, we can offer a reduction in the inequity by as much as 88\% in 20 days with extreme values approaching the mean.

\begin{figure}[ht!]
    \centering
    \includegraphics[width=0.5\linewidth]{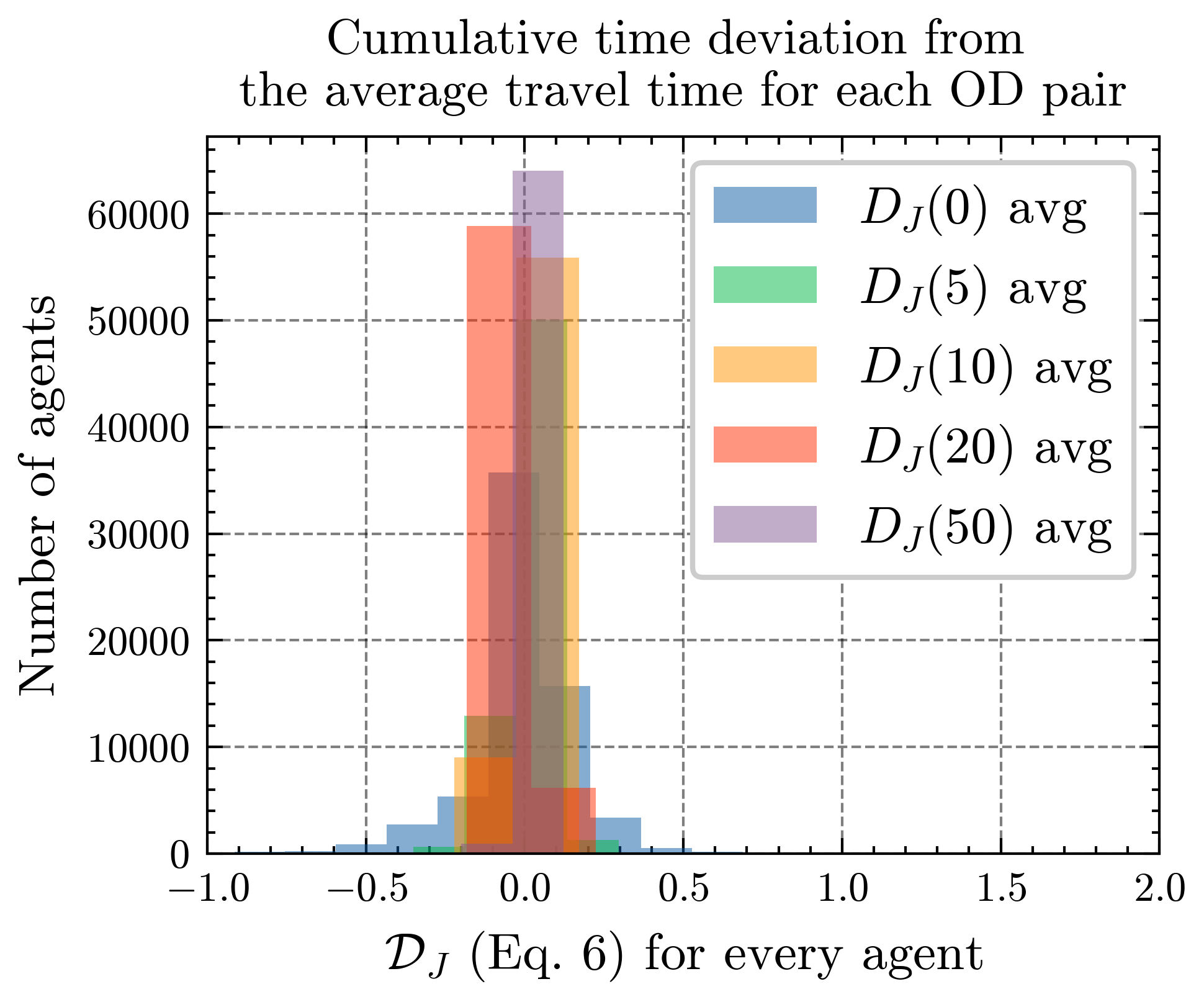}
    \caption{Normalized inequity distribution among agents in the system and its evolution in the cycle. While initially the cumulated time deviations $\mathcal{D_0}$ are unequally distributed, they quickly narrow and differences between faster and slower travellers are diminished.}
    \label{fig:D_J_avg}
\end{figure}

\begin{figure}[ht!]
    \centering
    \includegraphics[width=0.6\linewidth]{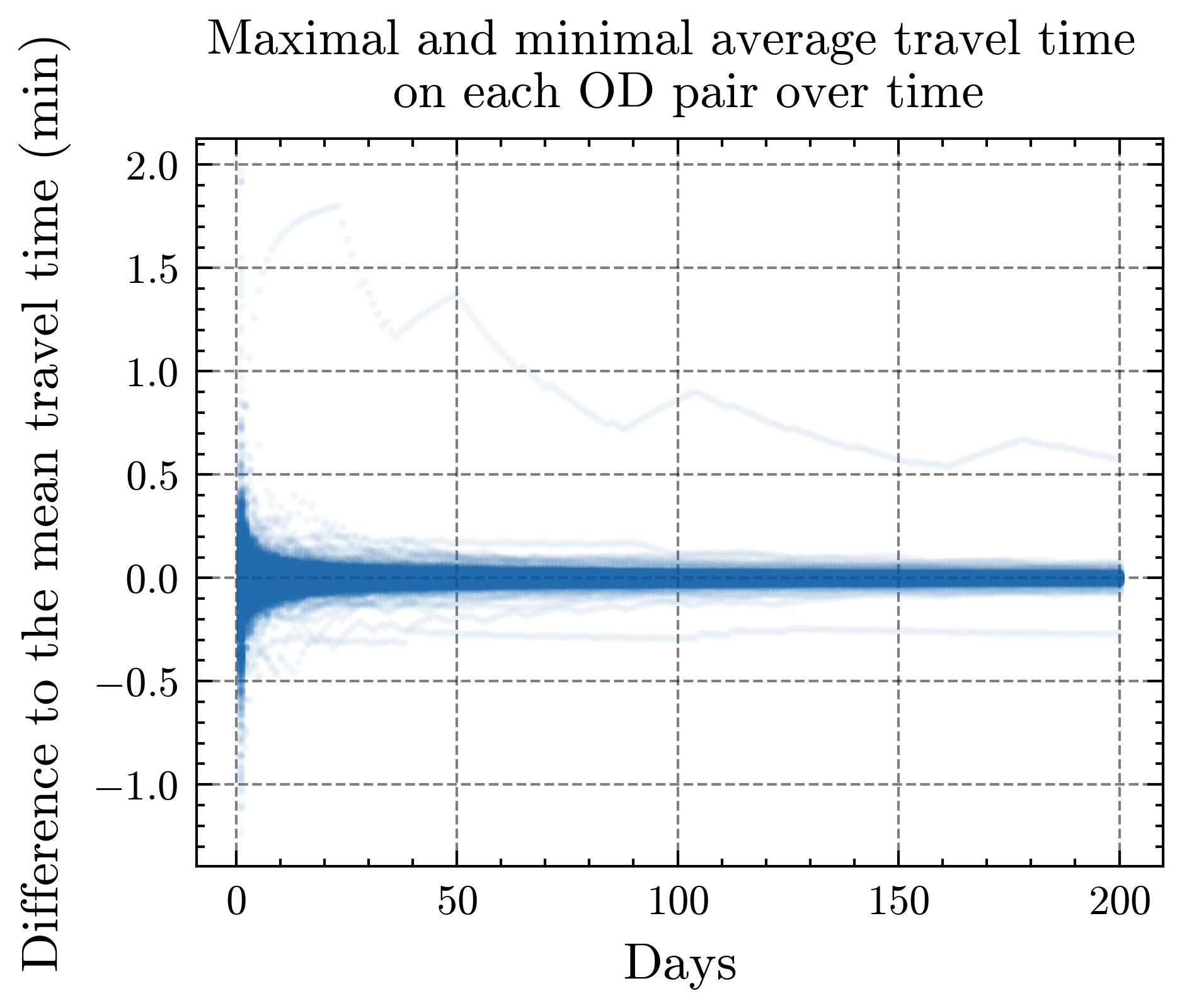}
    \caption{Applying the greedy algorithm steadily reduces the gap between the outliers in the system optimal assignment. For every day and OD pair, the plot shows the minimal and maximal average travel time over all agents on a given pair and day. The cumulated differences in average travel time (Y-axis) cancel out over time (X-axis). While overall inequity stabilises after several days, OD pairs typically reach a day with inequity as low as one-tenth of the initial value in just two or three weeks of simulation.}
    \label{fig:conv_greedy}
\end{figure}

\begin{table}[ht!]
\centering
\begin{tabular}{
    >{\raggedright\arraybackslash}p{2.8cm}
    S[table-format=1.4]
    S[table-format=1.4]
    S[table-format=1.5]
    S[table-format=1.5]
    S[table-format=1.5]
}
\toprule
\textbf{Metric} & $\overline{I_1}$ & $\overline{I_5}$ & $\overline{I_{10}}$ & $\overline{I_{20}}$ & $\overline{I_{50}}$ \\
\midrule
Max          & 8.003  & 5.2910  & 3.00600  & 6.11300  & 4.99000 \\
Mean         & 0.1085 & 0.0175  & 0.00690  & 0.00530  & 0.00310 \\
Median       & 0.0143 & 0.0014  & 0.00036  & 0.00010  & 0.00002 \\
Std Dev      & 0.4517 & 0.1384  & 0.07170  & 0.13480  & 0.10960 \\
75th \%ile   & 0.0648 & 0.0072  & 0.00200  & 0.00070  & 0.00014 \\
95th \%ile   & 0.4179 & 0.0475  & 0.01810  & 0.00700  & 0.00190 \\
\bottomrule
\end{tabular}

\caption{Distribution of an Inequity (eq. 8) over all OD pairs in Barcelona. Initially and after 5, 10, 20 and 50 days of applying Greedy Assignment Rule. Several outliers heavily influence the distribution, what is visible in the \textbf{Max} and \textbf{Std Dev}. Those are the pairs that involve many drivers and routes, slowing down the convergence of the algorithm.}
\label{tab:inequity_stats}
\end{table}



\subsection{General applicability}

We replicated our method on four more cities from the Transportation Networks repository \cite{TransportationNetworks}: Anaheim, Eastern Massachusetts, Berlin T., Sioux Falls, and Barcelona reported above. In all of the cases, the PoA was positive, ranging from 1 in Anaheim to 4 percent in Sioux Falls. After five days of applying Greedy algorithm (Sec. \ref{sec:greedy}) the inequity is reduced below 0.2 of the original value and after 50 days to 0.01 in Anaheim, Berlin and Sioux Falls (\ref{tab:stats_all_cities}. This demonstrates general applicability across the road networks and stability of the obtained performance.


\begin{table}[!ht]
\centering
\begin{tabular}{
    l
    S[table-format=7.0]
    S[table-format=7.0]
    S[table-format=1.2]
    S[table-format=5.2]
    S[table-format=1.2]
    S[table-format=1.2]
    S[table-format=1.2]
    S[table-format=1.2]
    }
\hline
City & {$T_{\text{UE}}$ (mins)} & {$T_{\text{SO}}$ (mins)} & {PoA} & {$\Sigma_{\text{od}}I_1^\text{od}$} & {$\Sigma_{\text{od}}I_5^\text{od}/\Sigma_{\text{od}}I_1^\text{od}$} & {$\Sigma_{\text{od}}I_{10}^\text{od}/\Sigma_{\text{od}}I_1^\text{od}$} & {$\Sigma_{\text{od}}I_{20}^\text{od}/\Sigma_{\text{od}}I_1^\text{od}$} & {$\Sigma_{\text{od}}I_{50}^\text{od}/\Sigma_{\text{od}}I_1^\text{od}$} \\

\hline
Barcelona & 1297794 &1268541 & 1.02 & 2066.71 & 0.16 & 0.07 & 0.06 & 0.04 \\
Anaheim & 1322588 & 1304584 & 1.01 & 3028.83 & 0.15 & 0.05 & 0.02 & 0.01  \\
EMA & 28183 & 27325 & 1.03 & 10.29 & 0.17 & 0.16 & 0.13 & 0.06 \\
Berlin T. & 581509 & 565388 & 1.03& 53890.17 & 0.10 & 0.03 & 0.01 & 0.01 \\
Sioux Falls & 7480157 & 7194761 & 1.04 & 307401.07 & 0.11 & 0.04 & 0.02 & 0.01 \\
\hline
\end{tabular}
\caption{Price of Anarchy in five Transportation Networks repository cities \cite{TransportationNetworks} which is eliminated with Wardropian Cycles. The initial inequity $I_1$ steadily decreases after 5, 10, 20 days of daily assignments and after 50 days ($I_{50}$) the Inequity is largely eliminated.}
\label{tab:stats_all_cities}
\end{table}


\section{Conclusions}
We introduce an idea that can provide a new paradigm of traffic assignment. 
Unlike previous approaches, our solution achieves both optimality and equity.
We show that CAV-operated cities do not have to be dystopian systems of fully compliant individuals, sacrificing for the sake of global optima. 
Quite contrary, with the Wardopian Cycles the new paradigm becomes possible: Cyclical User Equilibrium (CUE), which not only reduces total system costs, but also benefits each participating user.
Similarly, to Nash-like Wardrop User Equilibrium, no user is willing to opt-out from the system in CUE as it would increase his travel time (in a long run).
This is achieved by extending the scope of coordination to a series of days. 
Unlike in classic System Optimum, where drivers would need to coordinate on a single day (to load paths in such a way that total costs are minimised), in CUE the coordination spans in time. 

Our proposed method bridges the gap between User Equilibrium and System Optimum, which is of great importance for the efficiency of our future transport systems. 
Numerical results demonstrate the feasibility of this approach, even in city-scale systems and within short time frames. 
Only in Barcelona we can reduce the total travel time by 962~468 minutes in a single peak-hour, which may be translated to noticeably reduced mileage, fuel consumption, CO2 emission, noise, etc.
This unleashes the potential to remove the price of anarchy from future systems, making it a promising candidate for real-world traffic management. 
Its applicability will start with the rise of CAVs, paving the way for more efficient and fair traffic systems in the coming years.
Arguably, such concept is technically feasible even today, without CAVs, and requires only full compliance routing of informed drivers (which so far was beyond reach of system operators).

Here we provide the conceptual framework with a theoretical background, class of methods, theorems, and proofs. We propose methods to improve the basic cycles by reducing their: length, travel time variance, cumulated disutility, and inter-cycle inequality. We generalise the formal cycles to inequity minimising daily assignment and formulate the concept of Cyclical User Equilibrium. We postulate the new System Optimal assignment, which ensures that for each OD pair travel costs are not worse than in UE. Finally, we show how the concept can be applied for a \emph{bit} of traffic, which, hopefully, will allow future operationalisation of the theoretical concept.

This conceptual paper rather opens than answers research questions. While it shows that systems can be both fair and optimal, it introduces a whole class of new problems: 

\paragraph{Can it be applied for mixed traffic?} When only part of the flow complies, while remaining part follows the classic User Equilibrium? If so, how do individual humans adapt to actions of the fleet and how the fleet anticipates those adaptations?

\paragraph{Will it work for real-world traffic?} Here we assumed a strict relation between path flows, travel times, and static macroscopic model of traffic flow. What if (like in reality) above does not hold true? This calls for broader experimental scheme to verify whether, and to which extent, the vehicles are interchangeable among paths, i.e. if swapping routes of two vehicles (of the same OD pair and similar departure time) changes their and system travel times. 

\paragraph{Is it compatible with pricing schemes?} We argue that monetary pricing of paths contradict the idea of equity behind Wardopian Cycles, since the rich users will buy themselves priority to use faster paths while others would not be able to afford it. However, alternative, non-monetary schemes like Karma \cite{riehl2024karma} or mobility credits \cite{nie2012transaction} seems to be compatible.

\paragraph{What long-term dynamics will it induce?} The road traffic is just an element of the complex system of urban mobility. Introducing more attractive car trips while on one hand may improve the system performance, which, on the other, may induce modal shifts, also from active modes and public transport - not desired in sustainable policies.

\paragraph{Is it applicable for unexpected events?} Disruptions (accidents, demonstrations, maintenance, sport events, etc.) may cause significant delays, unexpected to drivers. This is typically unfair for the unlucky ones. How applying the equity minimising algorithm from Sec. \ref{sec:greedy} may compensate it, remains to be researched.

Apart from those, the concept has not been tested in: competition between platforms (possibly offering own assignments, optimal within their fleet only), when perceived costs of routes are different (and each user ranks paths differently), multi-modal setting (with drivers shifting to public transit and active modes), or in trip-chain context (when travellers perform multiple trips per day).

Fortunately, researchers seem to have some time to work on the above questions before the CAVs will start operating on a big scale and efficiently identify collective routing strategies. If that time is used effectively, society may benefit from an operational assignment scheme making our cities both optimal and fair.




\paragraph{Acknowledgement} This research is financed by the European Union within the Horizon Europe Framework Programme, ERC Starting Grant number 101075838: COeXISTENCE.

\bibliography{reference}

\newpage
\section*{Appendix}
\appendix

\section{Proof of the Claim \ref{cl:np_hard} }
\label{app:NPhard}

In this section, we prove that reduction of a Wardropian Cycle length by the flow partition method is \textit{NP}-hard.

\begin{definition} 
A   \texttt{MEAN-PARTITION} Problem is defined as follows:

     For a given OD pair with $Q$ daily drivers that travel on $K$ routes with a given System Optimal assignment that assigns $Q_k$ agents to the route $r_k$, find a partition of assignment $\mathcal{J} = [J_1, \dots, J_S]$ such that every subset has equal mean travel time.
\end{definition}

\begin{proposition}
    \texttt{MEAN-PARTITION} Problem is \textit{NP}-hard.
   
\end{proposition}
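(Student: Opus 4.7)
The plan is to establish NP-hardness by polynomial-time reduction from Subset Sum, which is a classical NP-hard problem: given positive integers $a_1, \ldots, a_n$ with total sum $S$ and target $T \leq S$, decide whether some subset sums to exactly $T$. I would build a MEAN-PARTITION instance using only unit-flow routes (so the multiset structure collapses to plain times) augmented by two carefully chosen gadget elements. Specifically, introduce $n+2$ unit-flow routes whose ``centered'' times are $b_i = a_i$ for $i \leq n$, together with $b_{n+1} = -T$ and $b_{n+2} = T - S$. These values sum to zero by construction, so the global mean of the shifted values $t_i = b_i + C$ (with $C$ chosen large enough that every $t_i$ is positive) equals exactly $C$.

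The point of this construction is that a subset $J$ of the $t_i$ has mean equal to the global mean $C$ if and only if $\sum_{i \in J} b_i = 0$. So equal-mean partitioning of the $t_i$ is equivalent to partitioning the $b_i$ into groups each summing to zero. The forward direction of the reduction is then immediate: if $S^\star \subseteq [n]$ satisfies $\sum_{i \in S^\star} a_i = T$, the two groups $S^\star \cup \{n+1\}$ and $([n] \setminus S^\star) \cup \{n+2\}$ both have zero $b$-sum and therefore mean $C$, giving a nontrivial MEAN-PARTITION solution of length $S = 2$.

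For the reverse direction, I would do a short case analysis on the locations of the two gadget indices $n+1$ and $n+2$ inside a hypothetical nontrivial partition. If they lie in the same part, then every other part consists solely of indices from $[n]$ with $b_i = a_i > 0$; no such nonempty part can sum to zero, forcing the partition to be trivial, a contradiction. If they lie in different parts, then the part containing $n+1$ has $b$-sum $-T + \sum_{i \in S^\star} a_i = 0$ for some $S^\star \subseteq [n]$, which is exactly a Subset Sum witness. This transforms any nontrivial MEAN-PARTITION solution back into a Subset Sum solution, completing the equivalence.

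The reduction is plainly polynomial in the input size (we add two elements and perform a single additive shift by $C \leq 1 + \max_i |b_i|$), so it yields NP-hardness of the MEAN-PARTITION decision version. The main obstacle I anticipate is the reverse direction: one must rule out cleverly structured multi-part partitions that might bypass the intended gadget behavior. The positivity of the original $a_i$ is the key technical lever for that case analysis, and choosing the shift $C$ carefully (e.g.\ $C > \max_i |b_i|$) ensures all route times remain strictly positive as required by the setting, while not altering which subsets attain the target mean.
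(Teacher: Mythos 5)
Your reduction is correct, and it shares the paper's central trick: shift all values by a constant so that ``every part has mean equal to the global mean'' becomes ``every part has centered sum zero,'' and then import hardness from Subset Sum. The difference is in the source problem and the gadget. The paper starts from the zero-target variant of \texttt{SUBSET-SUM} and asserts, somewhat loosely, that one may assume ``w.l.o.g.'' that the whole set sums to zero, then shifts by $\hat{t}$ and uses the observation that a zero-sum proper subset and its complement form an equal-mean partition; the converse direction (any part of a returned partition recenters to a zero-sum subset) is stated but the trivial/nontrivial bookkeeping is left implicit. You instead reduce directly from the standard positive-integer Subset Sum with target $T$ by appending the two gadget elements $-T$ and $T-S$ before shifting, which buys you two things: you never need the restricted ``total sum zero'' variant to be declared hard, and your explicit case analysis on the locations of the two gadget indices (same part versus different parts, using positivity of the $a_i$) gives a complete two-way equivalence for the nontrivial-partition decision problem, including the degenerate partitions the paper does not discuss. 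The paper's version is shorter and additionally notes membership in NP (irrelevant to the hardness claim itself, which is all the statement asks). Your edge cases $T=0$ and $T=S$ are harmless as you note, so the argument stands as a valid, slightly more self-contained alternative to the paper's proof.
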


\begin{proof}

Assume that the problem introduced above has a polynomial-time algorithm.

To prove that this problem is \textit{NP}-hard, we need to construct a solution to a problem that is known to be \textit{NP} using problem defined in the proposition. We will use a problem called the \texttt{SUBSET-SUM} problem \cite{garey1979computers}.

\begin{definition}
    \texttt{SUBSET-SUM} problem: Given a finite (multi-) set $A$ of integers of size $n$ and an integer $B$ decide whether there is a subset $A' \subset A$ such that the sum of the elements of $A'$ is exactly $B$.
\end{definition}

Take any instance of a Subset Sum problem such that $B = 0$. W.l.o.g. we can assume that sum of this set is equal 0, in such approach we look for non-trivial solutions to such problem (not the one that takes all elements in the set). Observe that if $A'$ is a proper subset of $A$ that sums up to 0, then its complement $A'^C$ is also such a set. 

Now, pick an arbitrary integer $\hat{t} > |\text{min } a \in A|$ and add it to every element of $A$. We can treat the resulting multiset of positive integers as a vector of travel times on an OD pair of $n$ agents. 

Using the algorithm for solving the \texttt{MEAN-PARTITION} problem, we find the partition of this set into subsets of equal mean.

Any set in this partition, after subtracting $\hat{t}$ from every element, satisfies the \texttt{SUBSET-SUM} problem, hence we constructed a polynomial algorithm for a problem known to be \textit{NP}-complete. Contradiction.

To finish the proof, we need to show that \texttt{MEAN-PARTITION} problem is in the \textit{NP} class. This is indeed true. Assume that we were given a supposed answer to this problem. We can decide in linear time (summing the elements in the partitions and calculating the averages) whether this solution is valid or not, hence \texttt{MEAN-PARTITION} $\in$ \textit{NP}.

\end{proof}

\section{Proof of Theorem \ref{prop:greedy}}
\label{app:algo_properties}

In this section, we prove that the assignment constructed by the Greedy algorithm yields the minimal system inequity on the next day out of all possible assignments.

\begin{proof}
    Let $\Gamma^G$ be a greedy assignment rule, and $\Gamma$ any other assignment rule.


    On day $J$, $\Gamma^G$ assigns to the sorted deviations from the previous day $\mathcal{D} = ([D_J]_{i_1} \geq [D_J]_{i_2} \geq \dots \geq [D_J]_{i_Q})$ travel times sorted in reverse order $(t_1 \leq t_2 \leq \dots \leq t_Q)$. We can express the agents' inequities on the next day as the following: 
    \begin{equation}
        \textbf{I}_{J+1}(\Gamma^G) = \sum_{j=1}^{Q} ([D_J]_{i_j} + t_j)^2 = \sum_{j=1}^{Q} ([D_J]_{i_j})^2 + t_j^2  + 2\sum_{j=1}^{Q} [D_J]_{i_j} \cdot t_j
    \end{equation}
    We can express the assignment given by $\Gamma$ as some permutation $\sigma$ of $(t_1 \leq t_2 \leq \dots \leq t_Q)$: $(t_{\sigma(1)} \leq t_{\sigma(2)} \leq \dots \leq t_{\sigma(Q)})$. Then the inequity can be written as follows:
     \begin{equation}
        \textbf{I}_{J+1}(\Gamma) = \sum_{j=1}^{Q} ([D_J]_{i_j} + t_{\sigma(j)})^2 = \sum_{j=1}^{Q} ([D_J]_{i_j})^2 + t_{\sigma(j)}^2  + 2\sum_{j=1}^{Q} [D_J]_{i_j} \cdot t_{\sigma(j)}
    \end{equation}

    The only difference in the inequity formulas between $\Gamma^G$ and $\Gamma$ are the factors $2\sum_{j=1}^{Q} [D_J]_{i_j} \cdot t_j$ and  $2\sum_{j=1}^{Q} [D_J]_{i_j} \cdot t_{\sigma(j)}$ respectively. But from the \textit{Rearrangement Inequality} (see \cite{hardy1952inequalities}), we know that:
    \begin{equation*}
        \sum_{j=1}^{Q} [D_J]_{i_j} \cdot t_j \leq \sum_{j=1}^{Q} [D_J]_{i_j} \cdot t_{\sigma(j)}
    \end{equation*}
    For any permutation $\sigma$, which completes the proof.
\end{proof}

\section{optimising the cycle length - technical results}
\begin{proposition}[optimisation based on a single cycle is NP-hard] 
\label{Prop_cycleNPhard}
Finding an optimal permutation in \eqref{eq:opt_single} is NP-hard.
\end{proposition}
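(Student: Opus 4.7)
The plan is a polynomial-time reduction from the \textsc{Partition} problem (given positive integers $a_1,\dots,a_n$ with $\sum_i a_i = 2S$, decide whether they can be split into two subsets of equal sum $S$), which is NP-complete. Given such an instance, I would construct an input to \eqref{eq:opt_single} with $N:=n+2$ deviation vectors supported on a single coordinate with scalar entries $\delta_1=a_1,\dots,\delta_n=a_n$ and $\delta_{n+1}=\delta_{n+2}=-S$. These sum to zero and can be realised from a valid sequence of daily assignment matrices on a small auxiliary OD pair by engineering the route travel times accordingly; the embedding is tedious but straightforward, and the other coordinates of the $D_j$'s may be padded with zero so as not to dominate the $\ell_\infty$ norm.

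Since $\sum_j\delta_j=0$, every cyclic contiguous partial sum equals a difference of two prefix sums $P_k=\sum_{j\le k}\delta_{\sigma(j)}$, so the objective in \eqref{eq:opt_single} simplifies to the range $\max_k P_k -\min_k P_k$. The core claim I would prove is: this range has minimum $\le S$ if and only if the Partition instance is a YES-instance. The ``if'' direction is immediate by listing elements of $A$, then $-S$, then elements of $B$, then $-S$ (for any partition $A\sqcup B$), which keeps all prefix sums in $[0,S]$. For the ``only if'' direction, if the optimal range $M-m\le S$, then at each of the two positions occupied by $-S$ the prefix sum drops by exactly $S$; for both the pre-drop and post-drop values to lie in $[m,M]$ one needs $M-m\ge S$, forcing equality and pinning the pre-drop value to $M$ and the post-drop value to $m$. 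Consequently the $a_i$'s lying in each of the two cyclic arcs between the two $-S$ entries sum to precisely $M-m=S$, exhibiting the required partition.

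The subtle step I expect to be the main obstacle is the ``only if'' direction, because one must rule out more creative cyclic interleavings. Two adjacent $-S$ entries force a two-step drop of $2S$, which already violates $M-m\le S$, so the two $-S$'s must be separated into two distinct arcs; the rest of the argument is invariant under the choice of starting index because the range of prefix sums is unchanged by cyclic rotation. Once the equivalence is proved, the NP-hardness of the decision problem ``is the optimum of \eqref{eq:opt_single} at most $S$?'' follows from the NP-hardness of Partition, and hence the optimisation problem itself is NP-hard.
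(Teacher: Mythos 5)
Your proposal is correct and takes essentially the same route as the paper's own proof: both reduce an equal-sum split problem (\texttt{PARTITION}/\texttt{SUBSET-SUM}) to \eqref{eq:opt_single} by building a zero-sum deviation sequence from the given numbers plus two entries of magnitude half their total, and showing the optimum attains its trivial lower bound if and only if the split exists. Your prefix-sum-range reformulation and the pre-drop/post-drop argument merely make explicit (with signs flipped) the step the paper asserts when it says any optimal permutation must, up to cyclic shift, separate the two large entries by two blocks of equal sum.
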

\begin{proof}    
Suppose there is an algorithm solving \eqref{eq:opt_single}. Then, noting that $\max_n |t_{n} - \hat{t}|$ is a lower bound in \eqref{eq:opt_single}, one can use this algorithm, at a linear additional cost, to solve the decision problem: 
\newline \noindent \emph {Does there exist a permutation $\sigma \in S_N$ such that  
\begin{equation}
    \label{eq:decision_single}
    \max_{N_S, N_L \in \{1,\dots, N\}} \left|\sum_{n=N_S}^{N_S + N_L} (t_{\sigma(n)} - \hat{t}) \right| = \max_n |t_{n} - \hat{t}|
\end{equation}
is satisfied?}

Now, consider any finite sequence $t_1, t_2, \dots, t_{N-2}$ of negative integers. Let $\bar{t} = \left|\sum_{n=1}^{N-2} t_n\right|$ and assume without loss of generality that $|t_n|< \bar{t}/2$ for every $n = 1,\dots, N-2$. Let $t_{N} = t_{N-1} = \bar{t}/2$. Then $t_1 + t_2 + \dots + t_N = 0$ and so the algorithm solving \eqref{eq:decision_single}  can decide whether there is a permutation $\sigma$ such that $\left|\sum_{n=N_S}^{N_S + N_L} t_{\sigma(n)}\right| \le \bar{t}/2$ for every $N_S, N_L$. Any such permutation, however, has to be, up to a cyclical shift, of the form 
\begin{equation*}
    \sigma = (N-1, n_1, \dots, n_k , N , n_{k+1}, \dots, n_{N-2} )
\end{equation*}
where $(n_1, n_2, \dots, n_{N-2})$ is a permutation of the set $\{1,2,\dots, N-2\}$ and  $t_{n_1} + \dots + t_{n_k} = -\bar{t}/2$ and $t_{n_{k+1}} + \dots + t_{n_{N-2}} = -\bar{t}/2$,
for some $k$. Therefore, an algorithm solving \eqref{eq:opt_single} can be used to decide, at a linear additional cost, whether numbers $t_1, t_2, \dots, t_{N-2}$ can be split into two subsets with equal totals, which is an instance of the NP-complete problem \texttt{SUBSET-SUM}.
\end{proof}

The general optimisation problem, where the assignments of drivers to routes do not follow a single permutation (cycle), yet after $N$ days every driver has used all the routes the required number of times is stated below. 

\begin{definition}
Permutations $\sigma_1, \dots, \sigma_N \in S_N$ are \emph{compatible} if $(\sigma_1(n), \sigma_2(n), \dots \sigma_N(n)) \in S_N$ for every $n = 1,\dots, N$.  
\end{definition}
We note that if $\sigma \in S_N$ then the permutations $\sigma_i(n):=\sigma(n+i)$, for $i=1,\dots,N$, i.e. permutations resulting from a cycle, are compatible and so a cyclical assignment based on a single permutation is encompassed by the general framework. 

\begin{definition}[Cycle inequity (general) optimisation problem]
\label{def:opt_multi}
Let $t_1, t_2, \dots, t_N$ be a sequence of positive natural numbers and let $\hat{t} := \frac 1 N \sum_{n=1}^N t_n$. The \emph{cycle inequity optimisation problem} is given by:
\begin{equation}
\label{eq:opt_multi}
    \min_{compatible \, \sigma_1, \sigma_2, \dots, \sigma_N \in S_N}  \left\{ \max_{\nu, N_L \in \{1,\dots, N\}} \left|\sum_{n=1}^{N_L} (t_{\sigma_\nu(n)} - \hat{t}) \right| \right\}.
\end{equation}
\end{definition}

\begin{conjecture}
    The problem \eqref{eq:opt_multi} is NP-hard.
\end{conjecture}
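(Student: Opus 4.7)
The plan is to attempt a reduction from \texttt{PARTITION} by amplifying the gadget construction used in the proof of Proposition \ref{Prop_cycleNPhard}, since the general problem differs from the single-cycle problem only in the additional freedom of choosing $N$ compatible permutations rather than a single cyclical one. Given positive integers $a_1, \ldots, a_{N-2}$ summing to $2B$, one asks whether they can be split into two subsets each summing to $B$. Set $t_i := -a_i$ for $i \le N-2$ and $t_{N-1} := t_N := B$, so $\hat{t}=0$, and consider the decision version of \eqref{eq:opt_multi}: does there exist a compatible family $(\sigma_1, \ldots, \sigma_N)$ with every prefix sum of every $\sigma_\nu$ bounded in absolute value by $B$?

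The easy direction is straightforward. If \texttt{PARTITION} has a YES answer, Proposition \ref{Prop_cycleNPhard} supplies a single cyclical permutation $\sigma$ all of whose windows, for every starting point and length, have absolute sum at most $B$. Setting $\sigma_\nu(n) := \sigma\bigl(((n+\nu-2) \bmod N)+1\bigr)$ produces a circulant Latin square, so the family is compatible; moreover, the length-$N_L$ prefix of $\sigma_\nu$ is exactly a length-$N_L$ cyclic window of $\sigma$, giving the bound $B$. Hence the constructed instance of \eqref{eq:opt_multi} has a YES answer at threshold $B$.

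The hard direction, and the principal obstacle, is the converse: showing that any compatible family achieving bound $B$ forces \texttt{PARTITION} to be solvable. The difficulty is that \eqref{eq:opt_multi} controls only prefix sums of each row, not arbitrary windows, and the rows of the Latin square are a priori unrelated; this extra freedom could in principle let the family meet the bound without any individual row encoding a partition. The approach I would pursue is to exploit compatibility columnwise: writing $M[\nu][n] := \sigma_\nu(n)$ and $S_\nu^k := \sum_{n=1}^k t_{M[\nu][n]}$, the identity $\sum_\nu S_\nu^k = k\sum_j t_j = 0$ (because every column of $M$ is a permutation of $\{1,\dots,N\}$) keeps the row partial sums centred at zero. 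Tracking, in addition, the two days at which the gadget elements $t_{N-1}=t_N=B$ enter each row, one would try to pigeonhole at least one row $\nu$ whose placement of these two large elements isolates a subset of the $-a_i$'s summing to exactly $-B$, yielding the sought partition.

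If this columnwise-plus-pigeonhole argument falls short because the two gadget elements can be ``smeared'' across distinct rows in ways that collectively stay bounded, the fallback is to amplify the gadget: place many independent copies of the partition instance, scaled by geometrically separated magnitudes so their contributions cannot cancel across rows, thereby forcing each row to individually resolve its own copy. An alternative worth trying is to reduce from a problem whose combinatorial structure already matches compatibility, such as \texttt{NUMERICAL 3-DIMENSIONAL MATCHING} or \texttt{LATIN SQUARE COMPLETION}, where Latin-square-type constraints are native and the smearing obstruction might vanish.
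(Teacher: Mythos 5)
The paper itself offers no proof of this statement---it is explicitly left as a conjecture---so there is nothing to compare your attempt against; the only question is whether your argument closes it, and it does not. Your ``easy'' direction is fine: given a YES instance of \texttt{PARTITION}, the circulant family built from the single-cycle permutation of Proposition \ref{Prop_cycleNPhard} is compatible, and its row prefixes are cyclic windows of that permutation, so the threshold $B$ is met (modulo the harmless normalisation $a_i < B$, analogous to the paper's assumption $|t_n|<\bar t/2$). But this direction alone proves nothing about hardness. The entire content of an NP-hardness reduction lies in the converse---that \emph{any} compatible family meeting the threshold forces a partition---and you explicitly leave that open, offering only a columnwise identity $\sum_\nu S_\nu^k=0$ and a hoped-for pigeonhole, plus speculative fallbacks (scaled gadget copies, reduction from \texttt{NUMERICAL 3-DIMENSIONAL MATCHING} or Latin square completion). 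None of these is carried out, so the proposal is a research plan, not a proof.

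The obstacle you name is real, and the paper's own Example \ref{ex:multi_better} shows why it is not a technicality: there, a compatible family achieves maximum cumulative deviation $4$ while the best single permutation achieves $5$, i.e.\ the extra freedom of general compatible assignments genuinely beats single cycles. Consequently the structural lemma driving Proposition \ref{Prop_cycleNPhard}---that any permutation achieving the lower bound must, up to cyclic shift, place the two large elements so as to isolate two blocks each summing to $-B$---has no established analogue for compatible families, and your reduction could in principle have YES instances of \eqref{eq:opt_multi} at threshold $B$ arising from non-circulant families even when \texttt{PARTITION} is infeasible (note also that \eqref{eq:opt_multi} constrains only row prefixes, not all windows, which weakens the structure further). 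Until you either prove that such families still encode a partition, or redesign the gadget so that each row is forced to resolve the instance individually, the conjecture remains exactly that.
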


\begin{example}[Inequity minimisation based on a single permutation is inefficient]
\label{ex:multi_better}
Consider $N=6$ and $(t_1, t_2, t_3, t_4, t_5, t_6) = (4,4,1,-3,-3,-3)$. Then $\hat{t} = 0$. Consider compatible permutations $\sigma_1, \dots, \sigma_6$ resulting in the following matrix (where $t_{\sigma}:= [t_{\sigma(1)}, \dots, t_{\sigma(N)}])$
\begin{equation*}
\begin{bmatrix}
    t_{\sigma_1}\\
    t_{\sigma_2}\\
    t_{\sigma_3}\\
    t_{\sigma_4}\\
    t_{\sigma_5}\\
    t_{\sigma_6}
\end{bmatrix}
= 
\begin{bmatrix}
    4 & -3 & -3 & 4 & -3 & 1\\
    4 & -3 & -3 & 4 & 1 & -3\\
    -3 & 4 & -3 & 1 & 4 & -3\\
    -3 & 4 & 1 & -3 & 4 & -3\\
    -3 & 1 & 4 & -3 & -3 & 4\\
    1 & -3 & 4 & -3 & -3 & 4\\
\end{bmatrix}
\end{equation*}
One can easily verify that the maximum cumulative deviation as per eq. \eqref{eq:opt_multi} is equal to $4$ and so it is optimal. On the other hand, the single permutation minimisation problem \eqref{eq:opt_single} results in a permutation (non-unique, up to cyclical shifts) $\sigma = (1,4,2,5,3,6)$ with $t_\sigma = (4,-3,4,-3,1,-3)$ for which the maximum cumulative deviation in eq. \eqref{eq:opt_single} equals $5$.
\end{example}

\section{Further measures of compliance}
In this section, we propose some other common-sense measures of discontent, which, however, need to be studied behaviourally in order to propose a general minimisation objective which would increase compliance of potential Wardropian-Cycle users. 

\begin{definition}[Max period above-average travel]
We define the \emph{maximum period of above-average} travel  
\begin{equation*}
    M^i = \max \{j_1: \exists j_0 :  [D_j]_i > 0 \mbox{ for every } j = j_0, \dots, j_0+j_1 \}.
\end{equation*}
\end{definition}

\begin{definition}[One-sided discontent]
For every day $L$ of the routing scheme and every driver $i$ subscribed to this routing scheme, we define the \emph{cumulative discontent} as the prefix sum of the sequence of cumulative travel time deviations (i.e. individual discontents $\mathcal{I}^i_J$).  
\begin{equation*}
\mathcal{D}^{+i}_L := \sum_{J=1}^L \mathcal{I}^{+i}_J =  \sum_{J=1}^L \sum_{j=1}^J \max \left\{0, [D_j]_i\right\}.
\end{equation*}
\end{definition}

\begin{example}[Illustrating the importance of using $\mathcal{H}$ while constructing the Cycles]
\label{ex:discontent}
Suppose the Wardrop Cycle consists of $3$ days of travel via three routes whose travel times are $(6, 1, 2)$. The deviations from the average, which equals $3$, are given by $(3,-2,-1)$. Then the matrices of deviations, cumulative deviations $\mathcal{I}^i_J$ and cumulative discontent of the scheme defined by permutation $\sigma_1 = (1,2,3)$ is given by (rows for different agents)  
$$ \mathcal{D}_1 (\sigma_1)= 
\begin{bmatrix}
    3 \\
    -2\\
    -1
\end{bmatrix},
\mathcal{D}_2 (\sigma_1)= 
\begin{bmatrix}
    1\\
    -3\\
    -2
\end{bmatrix}, 
\mathcal{D}_3 (\sigma_1)= 
\begin{bmatrix}
    0\\
    0\\
    0
\end{bmatrix}, 
\mathcal{H}_3 (\sigma_1)= 
\begin{bmatrix}
    3 & 4 & 4\\
    -2 & -5 & -5\\
    -1 & 1 & 1
\end{bmatrix}.
$$
On the other hand, the alternative permutation $\sigma_2 = (1,3,2)$ generates deviations and cumulative discontent 
$$ \mathcal{D}_1 (\sigma_2)= 
\begin{bmatrix}
    3 \\
    -2\\
    -1
\end{bmatrix},
\mathcal{D}_2 (\sigma_2)= 
\begin{bmatrix}
    2\\
    1\\
    -3
\end{bmatrix}, 
\mathcal{D}_3 (\sigma_2)= 
\begin{bmatrix}
    0\\
    0\\
    0
\end{bmatrix}, 
\mathcal{H}_3 (\sigma_2)= 
\begin{bmatrix}
    3 & 5 & 5\\
    -1 & -4 & -4\\
    -2 & -1 & -1
\end{bmatrix}.
$$
Comparison of the matrices reveals that the maximum cumulative discontent for $\sigma_1$ equals $4$, however it is $5$ for $\sigma_2$. Therefore, even though the inequity, as defined in definition \ref{def:opt_single}, on every day $1,2,3$ is exactly the same (as the individual cumulative deviations form the same multisets: $\{-1,-2,3\}$ on day 1, $\{1,2, -3\}$ on day $2$ and $\{0,0,0\}$ on day $3$), the maximum cumulative discontent is worse for $\sigma_2$. Therefore, the scheme defined by $\sigma_1$ seems to be preferred.
\end{example}

\end{document}